\newlength{\oldparindent}
\newlength{\oldparskip}
\newtheorem{theorem}{Theorem}
\newtheorem{corollary}{Corollary}
\newtheorem{remark}{Remark}
\begin{document}
\title{Reconsidering Design of Multi-Antenna NOMA Systems with Limited Feedback}
\author{Zhiyao~Tang, Liang~Sun, \emph{Member}, \emph{IEEE}, Lu~Cao, Shutong Qi, and Yong~Feng,
\emph{Member}, \emph{IEEE}
\thanks{ Z. Tang, L. Sun, L. Cao and Shutong Qi are with the School of Electronic and Information Engineering, Beihang University, Beijing, China (email: \{zytang\}@buaa.edu.cn,
\{eelsun\}@buaa.edu.cn, caolu66@buaa.edu.cn, and tony0612@buaa.edu.cn). L. Sun is also with International Research Institute for Multidisciplinary Science, Beihang University, Beijing.}
\thanks{Y. Feng is with the Yunnan Key Laboratory of Computer Technology Applications, Kunming University of Science and Technology, China (email: fybraver@kmust.edu.cn).}
}

\IEEEaftertitletext{\vspace{-0.75\baselineskip}}

\maketitle


\begin{abstract}
We provide in this paper a comprehensive solution to the design, performance analysis, and optimization of a multi-antenna non-orthogonal multiple access (NOMA) system for multiuser downlink communications under a general limited channel state information (CSI) feedback framework for frequency division duplex mode. We design a general framework including user clustering, joint power and bits allocation, CSI quantization and feedback, signal superposition coding, transmit beamforming, and successive interference cancellation at receivers. Then, we conduct a mathematically strict performance analysis of the considered system, and obtain a closed-form lower bound on the ergodic rate of each user in terms of transmit power, CSI quantization accuracy and channel conditions. For exploiting the potentials of multiple-antenna techniques in NOMA systems, we jointly optimize two key parameters, i.e., transmit power and the number of feedback bits allocated to each user, and propose low-complexity closed-form solutions.
Moreover, through asymptotic analysis, we reveal the interactions between the main system parameters and their impacts on the joint power and feedback bits allocation result, and hence show some guidelines on the system design. Finally, numerical results validate the correctness of our theoretical analysis and demonstrate the advantages of the proposed algorithms over the most related state of the art.
\end{abstract}

\begin{keywords}
Non-orthogonal multiple access, multiple-antenna techniques, limited feedback, random vector quantization, performance
analysis
\end{keywords}

\section{Introduction}\label{sec:introduction}
Non-orthogonal multiple access (NOMA) has recently attracted considerable attentions as a promising technique in the fifth generation (5G) mobile networks due to its potential in achieving high spectral efficiency \cite{Y.Saito2013,B.Kim_downlink_NOMA}. By combining superposition coding at the transmitters with successive interference cancellation (SIC) at the receivers, all the users in a NOMA system are able to decode their desired signal even though they share the same frequency, time and space resources. The spectral efficiency improvement of NOMA technique over orthogonal multiple access (OMA) has been well investigated for single-antenna systems \cite{Ding_NOMA2014,Timotheou_NOMA2015,J.Choi_NOMA2016,CLWANG_NOMA2016,Ding.UserPairing2016,Y.Zhang_NOMA2016}.
The spatial multiplexing property of multi-antenna systems can be naturally combined with power-domain NOMA to further enhance the spectral efficiency
\cite{JingjingCui_NOMA,Choi_NOMA2016,ZhiguoDing2016,YI.Choi2017,Z.Chen_NOMA2016,Z.Chen_NOMA2017,Nguyen_NOMA2017,C.Chen_NOMA,Q.Zhang_NOMA,YI.Choi2017,Ding_NOMA2016,Liu2016,Q.Yang_2017,Ding_NOMA2016,Liu2016,Q.Yang_2017,X.Chen_2017}.
NOMA was also regarded as a spectrum sharing technique in \cite{L.Zhang_JSAC_2017}.

It was noted in \cite{JingjingCui_NOMA} that all existing works on multi-antenna NOMA systems can be classified into two categories: 1) there is a cluster of multiple users supported by each channel spatial dimension (or by a beamforming (BF) vector) in average that perform NOMA independently with the other clusters \cite{JingjingCui_NOMA,Choi_NOMA2016,ZhiguoDing2016,YI.Choi2017,Z.Chen_NOMA2016,Z.Chen_NOMA2017,Nguyen_NOMA2017}, and 2) there is only one user supported by each channel spatial dimension in average\footnote{The transmission of this category is the same as that in OMA systems with SIC receiver employed by some users.}\cite{C.Chen_NOMA,Q.Zhang_NOMA}.
For the first category, \cite{B.Kim_downlink_NOMA} proposed a BF-based NOMA scheme, where a user clustering and power allocation algorithm was proposed to reduce the inter-cluster and intra-cluster interference.
\cite{JingjingCui_NOMA} proposed a NOMA scheme with joint the power allocation and BF vectors design to maximize the system utility subjected to the probabilistic constraints. Following the same line of \cite{B.Kim_downlink_NOMA}, \cite{ZhiguoDing2016} proposed a new design of precoding and detection matrices for more general multi-antenna NOMA system with multi-antenna users.
\cite{YI.Choi2017} purposed a NOMA user grouping and clustering method based on a long-term feedback for large-scale multi-antenna systems along with user scheduling based on a short-term feedback.
A theoretical framework of the novel concept termed quasi-degradation first introduced in \cite{Z.Chen_NOMA2016} was fully studied in \cite{Z.Chen_NOMA2017}, where a closed-form hybrid-NOMA precoding algorithm was obtained for two-user MISO-NOMA systems. \cite{Nguyen_NOMA2017} designed linear precoders for cooperative signal suppositions at multiple base stations (BSs) in multi-cell multi-antenna NOMA systems to maximize the total throughput.
Obviously, the second category does not fully explore the spatial multiplexing capability of multi-antenna systems.

The quality of CSI available at transmitter (CSIT) and receiver of a multi-antenna system plays an important role and determines the performance of any multiple-antenna technique. In practice, the CSI acquisition methods can be mainly classified into the following two kinds. In time duplex division (TDD) systems, by employing the channel reciprocity of downlink and uplink channels, the transmitter can obtain the downlink CSI through channel estimation during the uplink training. While in frequency division duplex (FDD) systems, the downlink CSI is usually first estimated and quantized using a pre-determined codebook at each receiver, and then conveyed to transmitter via feedback channel. Specifically, the NOMA schemes in \cite{C.Chen_NOMA,Q.Zhang_NOMA,JingjingCui_NOMA} were designed based on the kind of CSI acquisition in TDD mode, whilst the NOMA schemes in \cite{Ding_NOMA2016,Liu2016,Q.Yang_2017} were designed based on the kind of CSI acquisition in FDD mode. Particularly, the work in \cite{X.Chen_2017} had tried to employ a unified channel model to embrace both TDD and FDD modes. For both practical TDD and FDD modes, there will be residual inter-cluster and intra-cluster interference caused by the imperfect CSIT.

To the best of our knowledge, there have been very limited previous works on the transmit-receive design, performance analysis and optimization for the multi-antenna NOMA systems with quantized CSIT through limited feedback, with the very few exceptions in \cite{Ding_NOMA2016,Liu2016,Q.Yang_2017,X.Chen_2017}.
Specifically, the outage performance of NOMA was investigated in \cite{Ding_NOMA2016} for each group decomposed from a massive-antenna system with one-bit feedback. \cite{Liu2016} investigated the traditional zero-forcing BF (ZFBF) and random BF technologies for the downlink NOMA systems. A joint user selection and power allocation scheme was proposed to reduce the multiuser (MU) interference and improve the sum-rate performance based on both quantized channel vector and perfect SINR feedback. The performance was evaluated and analyzed through simulations.
Also employing ZFBF, \cite{Q.Yang_2017} proposed a dynamic user scheduling and clustering strategy and considered the net throughout as metric based on an approximated closed-form analytical outage probability for the delay-intolerant systems. Based on ZFBF, \cite{X.Chen_2017} tried to propose a similar multi-antenna NOMA scheme to that in \cite{Q.Yang_2017} with a unified imperfect CSIT model applicable to both FDD and TDD systems. The authors in \cite{X.Chen_2017} had also tried to jointly optimize the transmit power and feedback bits of each users. However, \cite{X.Chen_2017} did not use the well recognized practical CSI model for FDD systems as in \cite{Jindal06,Yoo_limited,Liu2016,Q.Yang_2017}.
One can see the model used by \cite{X.Chen_2017} actually assumed some terms in the practical CSI model that should vary at the same speed of channel fading to be constant\footnote{Specifically, the CSI model in \cite{X.Chen_2017} replaced the time-varying terms $\sin^2 \theta_{n,k}$ and $\cos^2 \theta_{n,k}$ of the channel model in \cite{Jindal06,Yoo_limited,Liu2016,Q.Yang_2017}  (see (\ref{eq:CSIT_model}) in this paper) by the constant terms $2^{-\frac{B_{n,k}}{M-1}}$ and $1- 2^{-\frac{B_{n,k}}{M-1}}$ respectively.}, which is impractical for FDD systems. As we will show with the analysis, the ergodic rate performance with this model effectively underestimates the actual performance. Another minor weak point with the analytical performance result of \cite{X.Chen_2017} is it can not apply for some special power allocation scenarios.

Moreover, we find it an essentially very hard problem to obtain the \emph{exact} analytical performance of any multi-antenna NOMA scheme based on the linear ZFBF at the transmitter for general system settings (e.g. \cite{Q.Yang_2017,X.Chen_2017}). The analytical performance of both schemes in \cite{Q.Yang_2017,X.Chen_2017} were actually obtained by ignoring the statistical dependence between the useful signal term and the interference terms in each user's received signal. However, it is very difficult to analyze the effect of the aforementioned ignorance of the statistical dependence on the resulting analytical performance.
For the optimization of feedback bits allocation, assuming the numbers of the feedback bits of all users were equal, \cite{Q.Yang_2017} optimized the feedback by maximizing the net throughput of the system.
In contrast to maximizing the ergodic sum rate (ESR) of system, \cite{X.Chen_2017} proposed an intuitive method that aimed to minimize the average sum power of inter-cluster interference.



In this paper, we provide a comprehensive study on the performance of multi-antenna NOMA and the joint power and feedback bits allocation optimization under the considered system. We reveal several important differences with respect to the solutions designed previously for multi-antenna NOMA downlink communication systems operating in FDD mode with limited CSI feedback, as well as provide practical implementation and engineering guidelines. Our main contributions are summarized as follows:
\begin{itemize}
  \item
  Motivated by the limitations with the previous works, we reconsider the design of the multi-antenna NOMA systems in FDD mode including user clustering, joint optimization of power and feedback bits allocation, limited CSI feedback, transmit BF and SIC at receivers.
  In this paper, we propose a low-complexity dynamic user clustering based on the large-scale fading of each user only, which can reduce the complexity of power allocation optimization and at the same time takes into consideration the user fairness.
  \item
  For a given power and feedback bits allocation, we provide a mathematically strict statistical analysis on the ergodic rate of each user taking into consideration the statistical dependence between the useful signal term and the interference terms in each user's received signal, which has never been done before. Our analysis based on the well recognized limited feedback framework results in a closed-form lower bound on the ergodic rate of each user without assuming any extreme for system parameters, a result that, to the best of the authors’ knowledge, has not been previously presented in the literatures.
  \item
  For a given power allocation result to all users, we obtain a closed-form expression for the optimal feedback bits allocation (without integer constraint) to minimize an upper bound on the ESR loss due to quantized CSI feedback. Then, we devise a low-complexity dynamic programming algorithm to find the optimal practical bits allocation solution.
  \item
  For the transmit power allocation, we first follow the intuitive power allocation scheme in \cite{X.Chen_2017} to employ equal power allocation among the users within each cluster. Then, we observe that the ergodic rate of the users other than the nearest user (to the BS) in each cluster is relatively much smaller than that of the nearest user, and the ESR of these users within each cluster is also very limited compared with that of the nearest user. Thus, we seek for the solution by maximizing the ESR of the nearest users of all clusters, which is generally a non-convex optimization problem.
  Then, based on the closed-form solution of the feedback bits allocation obtained above, we propose a closed-form sub-optimal solution to the power allocation among the different clusters. Finally, a low-complexity scheme of joint optimization of the transmit power and feedback bits allocation is obtained.
  \item
  Through asymptotic analysis of the joint power and feedback bits allocation solution, we observe several key insights. Specially, for the scenario with high CSI quantization accuracy (or equivalently large total bandwidth of feedback channels $B$) and finite total transmit power $P$, our power allocation solution tends to be the well known water-filling type solution for the system with perfect CSIT. For the scenario in the high power region with finite $B$, our power allocation solution tends to be equal power allocation among all users. Moreover, in the high power region with large enough $B$, the optimal feedback bits allocated to the nearest user in each cluster is scaled approximately linearly increasing as $\log_2 (P)$, whilst the corresponding one for the user other than the nearest user is scaled linearly approximately decreasing as $\log_2 (P)$. For the scenario with finite $B$ and large enough $P$, the number of feedback bits allocated to the user other than the nearest user in each cluster reduces to zero, whilst the one corresponding to the nearest user of each cluster approximately converges to a finite number that can be given in closed-form without integer constraint.
\end{itemize}

\emph{Notations}: $\mathcal{C}$ and $\mathcal{N}$ denote the sets of complex numbers and natural numbers respectively. $\mathbb{E}_{X}\{ \cdot \}$ represents expectation with respect to random variable $X$. $ \| \cdot\|$ denotes the $L_2$-norm of
a vector, $| \cdot|$ denotes the absolute value of a scalar. 

\section{System and Signal Models}
We consider downlink communications in a single-cell cellular network operating in FDD mode for delay-tolerant traffics, where the coding block of each user's information can be sufficiently long to cover multiple channel coherent time periods such that the ergodic rate can be used as the performance metric. This scenario has been widely considered in the literature of NOMA \cite{Ding_NOMA2014,Ding.UserPairing2016,ZhiguoDing2016,X.Chen_2017,L.Zhang_JSAC_2017}.
There is a BS equipped with $M$ antennas and a large number of users with single-antenna each. The BS simultaneously broadcasts information to multiple scheduled users by BF at each channel use. Power-domain NOMA technique is concurrently employed to further improve the spectral efficiency, where all $NK$ scheduled users
are grouped into $N$ ($ N\geq 1$) clusters with $K$ ($ K\geq 2$)  users
within each cluster being simultaneously supported by NOMA\footnote{For the convenience of presentation, we have assumed the number of users of each cluster to be equal to $K$. It is easy to see, with very simple generalization, our NOMA framework can simultaneously support arbitrary number of users (with some constraint by $M$), and can be applicable to the scenarios with different number of users in each cluster.}\cite{B.Kim_downlink_NOMA}.
Our user clustering method will be introduced in the following. We denote the $M$-dimensional channel vector from the BS to the $k$-th user in the $n$-th cluster (denoted as user $(n, k)$ afterward) as $d_{n,k}^{-\frac{\alpha}{2}} \mathbf{h}_{n,k}$, where $\mathbf{h}_{n,k} \in \mathcal{C}^{1\times M}$ and $d_{n,k}$ are respectively the fast-fading channel and the distance from the BS to user $(n, k)$. We focus on the MU communications on a target time-frequency resource block (RB). Thus, all channels are assumed to undergo block flat-fading with a path-loss governed by the exponential coefficient $\alpha$.

\subsection{Channel and Feedback Models}

Since we consider the system in FDD mode, to focus on the effect of limited feedback, we follow many previous works (e.g. \cite{Ding_NOMA2016,Liu2016,Q.Yang_2017}) to assume each user can perfectly estimate the downlink CSI, and then feeds back the quantized CSI to the BS through an error-free but limited-rate feedback channel. The total bandwidth of the feedback channels of all users is constrained to be $B$ bits.

As \cite{X.Chen_2017}, the CSI obtained by user $(n,k)$ is quantized using a codebook $ \mathbb{C}_{n,k}$, which consists of $2^{B_{n,k}}$ codewords. Here $B_{n,k}$ is the number of feedback bits allocated to user $(n,k)$.
Since the optimal CSI quantization strategy is unknown in general and is out of the scope of this work, we employ the same random quantization codebooks as those in \cite{Jindal06,Yoo_limited}. The rule for user $(n, k)$ to quantize the channel direction information (CDI) (i.e., $\tilde{\mathbf{h}}_{n,k} = \frac{\mathbf{h}_{n,k}}{||\mathbf{h}_{n,k}||}$) is given by $\hat{\mathbf{h}}_{n,k} = \arg \max \limits_{ \mathbf{c} \in \mathcal{C}_{n,k}  } \left| \tilde{\mathbf{h}}_{n,k} \mathbf{c}^H\right|$. Then, $\tilde{\mathbf{h}}_{n,k}$ can be decomposed as\cite{Jindal06}
\begin{eqnarray}\label{eq:CSIT_model}
\tilde{\mathbf{h}}_{n,k} = \cos\theta_{n,k} \hat{\mathbf{h}}_{n,k} + \sin\theta_{n,k} \tilde{\mathbf{e}}_{n,k},
\end{eqnarray}
where $\theta_{n,k} = \angle(\tilde{\mathbf{h}}_{n,k}, \hat{\mathbf{h}}_{n,k})$ and $\tilde{\mathbf{e}}_{n,k}$ is the normalized quantization error vector that is isotropically distributed in the nullspace of $\hat{\mathbf{h}}_{n,k}$. Note that, due to very limited bandwidth of feedback channels, only the index of quantized CDI in the codebook of each user is fed back.
For the transmitter to obtain the large-scale fading (or $d_{n,k}^{-\alpha} $), each user measures ``reference signal receiving power'' through downlink cell-specific reference signals (for long term evolution (LTE) systems) or CSI-reference signals (for LTE-advanced systems) \cite{TS36.211} and then feeds back mean received signal power to the transmitter.
Since the required frequency to feed back this information is much lower than that of fast fading (i.e., quantized CDI), we ignore its cost and follow many previous related papers to assume path loss of each user is known by the transmitter.

\vspace{-3mm}
\subsection{User Clustering}

Most of the existing user clustering methods for multi-antenna NOMA systems are according to fast-fading channels.
For instance, the semi-orthogonal user clustering schemes based on CDI were proposed in \cite{Liu2016}. However, this kind of methods carries out exhaustive search resulting in high implementation complexity, and additionally requires feedback of accurate enough channel quality information (e.g., channel gain or SINR) for user clustering and determining the order of SIC within each cluster, which costs much more bandwidth of feedback channels for them to work correctly.
In addition, user clustering also depends on the power allocation to each user in general, which makes the kind of the methods based on instantaneous CSI even more complicated.
In spite of all existing methods, considering delay-tolerant traffics and only very limited CSIT of each user\footnote{Recall that only quantized CDI and the large-scale fading (path-loss) are available at the transmitter. Thus, user clustering method based on full CSIT of fast-fading channel at the transmitter is impossible.},
we propose a low-complexity clustering strategy based on the large-scale fading of each user only. Considering the user fairness, our scheme first randomly selects $N K$ users each time, and then forms $N$ user clusters with $K$ users within each following the criteria that
\begin{eqnarray}\label{eq:cluster_criteria1}
\frac{d_{n, i}^{-\alpha}  }{\sigma_{n,i}^2} & \leq & \frac{d_{m, i}^{-\alpha}}{\sigma_{m,i}^2} ~ ~\text{for} ~ 1 \leq  m < n\leq N , \; \forall 1\leq i \leq K ; \\
\label{eq:cluster_criteria2}
\frac{d_{n, i}^{-\alpha}  }{\sigma_{n,i}^2} & \leq &  \frac{d_{n, j}^{-\alpha}  }{\sigma_{n,j}^2}~~ \text{for} ~  1\leq j \leq i \leq K, \; \forall   1\leq n \leq N,
\end{eqnarray}
%
where $\sigma_{n,k}^2$ is the variance of the additive white Gaussian noise (AWGN) at user $(n, k)$.
It will be shown in Section \ref{subsec:power_allocation} after Remark \ref{remark:last} that, the criteria working with our proposed joint power and feedback bits allocations maximize the system ESR.

\subsection{Zero-Forcing Beamforming and Successive Interference Cancellation}

To mitigate MU interference, we employ the widely used simple linear ZFBF at the BS\cite{Liu2016,Q.Yang_2017,Jindal06,Yoo_limited,X.Chen_2017}, where the BF vector $\mathbf{w}_n$ of cluster $n$ satisfies $\hat{\mathbf{h}}_{m,k} \mathbf{w}_n = 0$ $\forall m \neq n$ and $ \forall k$. Following the method in \cite{X.Chen_2017},
for cluster $n$ we construct a complementary matrix $\bar{\mathbf{H}}_n = \Big[\hat{\mathbf{h}}_{1,1}^H, \dots, \hat{\mathbf{h}}_{1,K}^H , \dots, \hat{\mathbf{h}}_{n-1,K}^H, \hat{\mathbf{h}}_{n+1,1}^H, \dots, \hat{\mathbf{h}}_{N-1,K}^H, \hat{\mathbf{h}}_{N,1}^H ,\\\dots, \hat{\mathbf{h}}_{N,K}^H \Big]^H \in \mathcal{C}^{(N-1)K \times M}$.
Let the singular-value decomposition (SVD) of $\bar{\mathbf{H}}_n$ be $\bar{\mathbf{H}}_n = \bar{\mathbf{U}}_n  \bar{\mathbf{\Sigma}}_n [\bar{\mathbf{V}}_n^{(1)}, \bar{\mathbf{V}}_n^{(0)}]$, where the columns of matrix $\bar{\mathbf{V}}_n^{(0)} \in \mathcal{C}^{M \times (M - \bar{r}_n)}$ are the right-singular vectors corresponding to the zero singular-values. Here $\bar{r}_n \triangleq \text{rank} ( \bar{\mathbf{H}}_n) = (N - 1)K$ with probability 1 when the channel vector of each user is continuously distributed.
Then, $\mathbf{w}_n$ can be obtained as $\mathbf{w}_n = \bar{\mathbf{V}}_n^{(0)} \mathbf{p}_n$\cite{X.Chen_2017}, where $\mathbf{p}_n$ is uniformly distributed on the surface of the $(M - \bar{r}_n)$-dimensional unit sphere. It is easy to see the constraint $M > (N-1)K$ needs to be satisfied for ZFBF to work properly.
Then, with the decomposition of channel in (\ref{eq:CSIT_model}), the received signal at user $(n, k)$ can be written as
\vspace{-1mm}
\begin{eqnarray}
\hspace{-9mm} y_{n,k} &=& d_{n,k}^{-\frac{\alpha}{2}} \mathbf{h}_{n,k} \sum_{i=1}^N \mathbf{w}_{i} s_{i} + n_{n,k} \nonumber\\
\hspace{-9mm} &=& d_{n,k}^{-\frac{\alpha}{2}} ||\mathbf{h}_{n,k}|| \sin\theta_{n,k} \sum_{i=1,i\neq n}^N \tilde{\mathbf{e}}_{n,k} \mathbf{w}_{i} s_{i} \nonumber\\
\hspace{-9mm}&&+~ d_{n,k}^{-\frac{\alpha}{2}} \mathbf{h}_{n,k} \mathbf{w}_{n} s_{n}+ n_{n,k}, \nonumber
\end{eqnarray}
where $s_i = \sum_{j=1}^{K} \sqrt{P_{i,j}} s_{i,j}$ is the superposition coded signals for all users in cluster $i$
with $s_{i,j}$ and $P_{i,j}$ being the signal of user $(i,j)$ and the corresponding power, and $n_{n,k}$ being the AWGN.
In general, $P_{i,j}$ should be properly allocated to enable successful SIC and ensure certain level of ESR performance of each user and also the fairness among the users within a cluster.

Although ZFBF can cancel partial interference, there still exists inter-cluster and intra-cluster interference due to limited CSI feedback. According to the principle of NOMA, the users other than user $(n , K)$ in each cluster conduct SIC to recover the information.
Recall that it is assumed the coding block of the considered systems can be sufficiently long. Thus, according to information theory, if the transmission rate of user $(n,k)$ does not exceed the ergodic rate of user $(n, k)$ supported by the multiuser fading channels, the decoding error with SIC at user $(n, k)$ can be arbitrarily small as the coding block length of each user goes large.
Moreover, for the same reason as noted in \cite{X.Chen_2017,Q.Yang_2017}, when the power is properly allocated among the signals of the users within each cluster, user $(n, k)$ can always successfully decode the user $(n, j)$'s signal for $\forall j > k$ with the user ordering given by (\ref{eq:cluster_criteria2}), if user $(n, k)$ can decode its own signal\footnote{It is easy to see that our proposed method in Subsection \ref{subsec:power_allocation} that allocates equal power to the users within each cluster satisfies this requirement.}.
As a result, before decoding its own signal, user $(n, k)$ can cancel the interference from user $(n, j)$ for $ j>k $ in the received signal.
Thus, we can assume perfect SIC can be performed at each user as \cite{X.Chen_2017,Q.Yang_2017}. Then, after SIC, the SINR at user $(n, k)$ for $k=1$ and $ k >1$ are given respectively by (\ref{eq:gamma_nk_eq1}) and (\ref{eq:gamma_nk_gt1}) at the top of next page.
\begin{figure*}
\begin{eqnarray}\label{eq:gamma_nk_eq1}
\gamma_{n,1} &=& \frac{d_{n,1}^{-\alpha} |\mathbf{h}_{n,1} \mathbf{w}_{n}|^2 P_{n,1}}
{\underbrace{d_{n,1}^{-\alpha} \sin^2\theta_{n,1} ||\mathbf{h}_{n,1}||^2 \sum_{i=1,i \neq n}^{N} |\tilde{\mathbf{e}}_{n,1} \mathbf{w}_{i}|^2 \sum_{l=1}^{K} P_{i,l}}_{\text{Inter-cluster interference}} + \sigma_{n,1}^2},\\
\label{eq:gamma_nk_gt1}
\gamma_{n,k} &=& \frac{d_{n,k}^{-\alpha} |\mathbf{h}_{n,k} \mathbf{w}_{n}|^2 P_{n,k}}
{\underbrace{d_{n,k}^{-\alpha} |\mathbf{h}_{n,k} \mathbf{w}_{n}|^2 \sum_{j=1}^{k-1} P_{n,j}}_{\text{Intra-cluster interference}} +
\underbrace{d_{n,k}^{-\alpha} \sin^2\theta_{n,k} ||\mathbf{h}_{n,k}||^2 \sum_{i=1,i \neq n}^{N} |\tilde{\mathbf{e}}_{n,k} \mathbf{w}_{i}|^2 \sum_{l=1}^{K} P_{i,l}}_{\text{Inter-cluster interference}} + \sigma_{n,k}^2} .
\end{eqnarray}
\vspace{-6mm}
\hrulefill
\end{figure*}
The SIC may not be perfect in practice due to limited capability of users. However, the study of the impact of imperfect SIC is out of the scope of this paper and will be considered in the future work.
\begin{remark}
We note that linear ZFBF is widely recognized as a low-complexity processing. The main computation complexity of ZFBF lies in the SVD of matrix $\bar{\mathbf{H}}_n $ for $n=1, 2, \cdots, N$, which requires the total number of floating point operations per second (FLOPS) of $~32 N \Big[  M (N-1)^2K^2 + 2(N -1)^3   K^3 \Big]$\cite{matrix_com}. Acquiring knowledge of complete CSI or even partial fast fading at transmitter increases system complexity. Since the path loss of each user remains constant for a relatively long time, our user clustering and ordering method only updates at much lower frequency than the update of BF vectors which is carried out with the period of channel coherent time. Thus, our method is of low-complexity.
\end{remark}
The power allocation and feedback bits allocation have significant impact on the performance of the NOMA systems. Before optimizing the power and feedback bits allocation, we first analyze the ergodic rate of the considered system.

\section{Performance Analysis}\label{sec:actual_performance}
In this section, we focus on obtaining the analytical ergodic rate of each user given power and bits allocation and without assuming any extreme for system parameters. All fast-fading channels are assumed to follow circularly symmetric complex Gaussian distribution with zero mean and unit variance, i.e., Rayleigh fading.
We note that, since the user ordering given by (\ref{eq:cluster_criteria1}) and (\ref{eq:cluster_criteria2}) does not depend on the fast fading,
the distribution of each fast-fading term of different scheduled users is the same. The ergodic rate of user $(n, k)$ is given by $R_{n,k} = \Bbb{E}[ \log_2(1 + \gamma_{n,k})]$ with $\gamma_{n,k}$ given by (\ref{eq:gamma_nk_eq1}) or (\ref{eq:gamma_nk_gt1}) at the top of next page.
First, we note that, due to the statistical dependence between useful signal term and interference terms at each user, it is very difficult if not impossible to obtain the exact analytical result of $R_{n,k}$. Thus, we will turn to obtaining an analytical result as accurate as possible. Specifically, we can obtain a lower bound of $R_{n,k}$ in the following theorem.
\begin{theorem}\label{theorem:R_nk_LB}
$R_{n,k}$ $\forall n, k$ can be lower-bounded as $R_{n,k} \geq R_{n,k }^{LB1}$ with $R_{n,k }^{LB1}$ given by
\begin{eqnarray}\label{eq:R_nk_LB1_1}
\hspace{-0.8cm}&& R_{n, 1}^{LB1} = \Theta_{n,k}(\alpha_{n,1} P_{n,1}, S_{n,1}^{(3)}(M-1)^{-1} \delta_{n,1})  \nonumber\\
\hspace{-0.8cm}&&\hspace{0.4cm}- \log_2(e) \exp \left( \frac{M-1}{S_{n,1}^{(3)} \delta_{n,1}} \right) \sum_{q=1}^{M-1} E_q \left( \frac{M-1}{S_{n,1}^{(3)} \delta_{n,1}} \right), ~~~~~\\
\label{eq:R_nk_LB1_k}
\hspace{-0.5cm}  &&R_{n, k}^{LB1} = \Theta_{n,k} \left( S_{n,k}^{(1)}, S_{n,k}^{(3)}(M-1)^{-1} \delta_{n,k} \right) \nonumber\\
\hspace{-0.8cm}&&\hspace{0.4cm}- \Theta_{n,k} \left( S_{n,k}^{(2)}, S_{n,k}^{(3)}(M-1)^{-1} \delta_{n,k} \right), \text{for} ~k>1,~
\end{eqnarray}
where $S_{n,k}^{(1)} = \frac{d_{n,k}^{-\alpha}}{\sigma_{n,k}^2} \sum_{j=1}^k P_{n,j}$, $S_{n,k}^{(2)} = \frac{d_{n,k}^{-\alpha}}{\sigma_{n,k}^2} \sum_{j=1}^{k-1} P_{n,j}$, $S_{n,k}^{(3)} = \frac{d_{n,k}^{-\alpha}}{\sigma_{n,k}^2} \sum_{i=1,i \neq n}^{N} \sum_{j=1}^{K} P_{i,j}$, $\delta_{n,k} = 2^{-\frac{B_{n,k}}{M-1}}$,
and $E_q(x) = \int_{1}^{\infty} \frac{e^{-xt}}{t^q} dt$ is generalized exponential integral. $\Theta_{n,k} (a,b)$ is given by
\begin{eqnarray}
\hspace{-7mm}&&\Theta_{n,k} (a,b) =  \log_2(e) \Bigg[ \frac{(-1)^M(M-1)}{b^{M-1}} I_1(a) \nonumber\\
\hspace{-7mm}&&+ \sum_{p=1}^{M-1} \sum_{q=1}^{M-p} \frac{ (-1)^{p+q+1} (M-1) }{ (q-1)! b^p } I_2(a,b,p,q) +
\label{eq:res_Theta_nk} \\
\hspace{-7mm}&& \sum_{p=1}^{M-1} \sum_{q=2}^{M-p} \sum_{s=0}^{q-2} \frac{ (-1)^{p+s-1} (q-s-2)! (M-1) }{ (q-1)! b^{p} } I_3(a,b,p,s)
\Bigg] \nonumber
\end{eqnarray}
with $I_1(a) $, $I_2(a,b,p,q)$ and $I_3(a,b,p,s) $ given respectively as
\vspace{-1mm}
\begin{eqnarray}
&& \hspace{-0.7cm}  I_1(a) = \sum_{t=0}^{M-2} {M-2 \choose t} \frac{(-1)^t}{a^{t+1}} \tilde{\Psi}\left(-M-1-t, \frac{1}{a} \right), \nonumber \\
&& \hspace{-0.7cm}  I_2(a,b,p,q) = \sum_{r=0}^{M-2} \sum_{t=0}^{p-1+r} {M-2 \choose r} {p-1+r \choose t} \nonumber \\
&&\hspace{0.7cm}\times \frac{(-1)^{p-1-t} b^{p-1+r-t}}{a^{r+1}} \Psi \left(q-4-t, \frac{1}{a+b}, \frac{1}{b}\right), \nonumber \\
&&  \hspace{-0.7cm}  I_3(a,b,p,s)  = \sum_{t=0}^{M-2} {M-2 \choose t} \frac{ (-1)^{t} a^{p-1} }{ b^{s-1} (p+t) } \nonumber \\
&&\hspace{0.7cm}\times ~{_2} F_1 \left( s-1, p+t; p+t+1; -\frac{a}{b} \right), \nonumber
\end{eqnarray}
where $_2F_1(a,b;c;z) $ is a Gauss-hypergeometric function \cite[9.14.2]{Gradshteyn2007} and $\Psi (n, u, v)$ is given by
\begin{align}\label{eq:res_Psi_1}
&\Psi(n, u, v) \triangleq \nonumber \\
&\begin{cases}
\sum_{k=0}^n \frac{(-1)^k n! \left[ v^{n-k} e^v E_i(-v) - u^{n-k} e^u E_i(-u) \right]}{(n-k)!}\\
\hspace{4mm} - \sum_{k=0}^{n-1} \frac{(-1)^k n! (v^{n-k} - u^{n-k}) }{(n-k)(n-k)!}  - (-1)^n n! \ln \left( \frac{v}{u} \right), n>0 \\
e^v E_i(-v) - e^u E_i(-u) - \ln \left( \frac{v}{u} \right), ~~~~~~~~~~~~~~~~~~~~n=0\\
E_i(-v)E_i(v) - E_i(-u)E_i(u) + \frac{E_i^2(-u) - E_i^2(-v)}{2} \\
\hspace{4mm} + \frac{1}{2}\sum_{m=0}^{\infty} \sum_{l=0}^{2m} \frac{2 (v^l e^{-v} - u^l e^{-u})}{(2m+1)^2\,l!}, ~~~~~~~~~~~~~~~~n=-1 \\
\sum_{k=1}^{-n-1} \frac{(-n-k-1)!}{(-n-1)!} \left[ u^{n+k} e^u E_i(-u) - v^{n+k} e^v E_i(-v) \right] \\
\hspace{4mm}+ \frac{E_i(-v)E_i(v) - E_i(-u)E_i(u)}{(-n-1)!} + \\
\hspace{4mm}\sum_{k=1}^{-n-1} \frac{ (-n-k-1)!(v^{n+k} - u^{n+k}) }{ (-n-1)!(n+k) } + \frac{E_i^2(-u) - E_i^2(-v)}{2(-n-1)!} + \\
\hspace{4mm}\frac{1}{(-n-1)!} \sum_{m=0}^{\infty} \sum_{l=0}^{2m} \frac{2 (v^l e^{-v} - u^l e^{-u})}{(2m+1)^2\,l!}, ~~~~~~~~~~n\leq -2.
\end{cases}
\end{align}
$E_i(x) = -\int_{-x}^{\infty} \frac{e^{-t}}{t} dt$ is the exponential integral. For $n < 0$ with $v \to +\infty$, $\Psi(n, u, v) $ becomes
\begin{eqnarray}\label{eq:res_Psi_2}
\hspace{-7mm}&&\tilde{\Psi}(n, u) \triangleq {\Psi}(n, u, v) \big|_{v \to +\infty} \nonumber\\
\hspace{-7mm}&&=\frac{ e^u E_i(-u) }{(-n-1)!} \sum_{k=1}^{-n-1} (-n-k-1)! u^{n+k} - \frac{ E_i(-u) E_i(u) }{(-n-1)!} \nonumber\\
\hspace{-7mm}&&\hspace{4mm} -\sum_{k=1}^{-n-1} \frac{ (-n-k-1)!~u^{n+k} }{ (-n-1)! (n+k) }
+ \frac{ E_i^2(-u) }{2 (-n-1)!} \nonumber\\
\hspace{-7mm}&&\hspace{4mm}- \frac{ 1 }{(-n-1)!} \sum_{m=0}^{\infty} \sum_{l=0}^{2m} \frac{2 u^l e^{-u}}{(2m+1)^2\,l!}. ~~
\end{eqnarray}
\end{theorem}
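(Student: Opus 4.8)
The plan is to express the ergodic rate as a difference of two logarithmic expectations that share the same channel randomness, isolate the two correlated quantities that truly matter, replace the residual interference by its conditional mean through a convexity argument that keeps a valid lower bound, and finally carry out the resulting expectation in closed form.

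First I would observe that, after dividing the numerator and denominator of $1+\gamma_{n,k}$ by $\sigma_{n,k}^2$, the SINR involves only two random quantities tied to user $(n,k)$'s own channel: the effective signal gain $U_{n,k}\triangleq|\mathbf{h}_{n,k}\mathbf{w}_n|^2$ and the channel norm $g_{n,k}\triangleq\|\mathbf{h}_{n,k}\|^2$. Concretely,
\[
1+\gamma_{n,k}=\frac{1+S_{n,k}^{(1)}U_{n,k}+Y_{n,k}}{1+S_{n,k}^{(2)}U_{n,k}+Y_{n,k}},
\]
where $Y_{n,k}=\frac{d_{n,k}^{-\alpha}}{\sigma_{n,k}^2}\sin^2\theta_{n,k}\,g_{n,k}\sum_{i\neq n}|\tilde{\mathbf{e}}_{n,k}\mathbf{w}_i|^2\sum_l P_{i,l}$ is the normalized inter-cluster interference, and the choice $S_{n,1}^{(2)}=0$ recovers the $k=1$ case since then there is no intra-cluster interference. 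The crux, and the dependence ignored in prior work, is that $U_{n,k}$ and the interference scale $g_{n,k}$ are \emph{both} functions of $\mathbf{h}_{n,k}$ and hence correlated ($U_{n,k}\le g_{n,k}$). Because $\mathbf{w}_n$ is statistically independent of $\mathbf{h}_{n,k}$ (it is built from the other clusters' quantized directions and an independent isotropic $\mathbf{p}_n$), I would write $U_{n,k}=g_{n,k}|\tilde{\mathbf{h}}_{n,k}\mathbf{w}_n|^2$ with $g_{n,k}\sim\mathrm{Gamma}(M,1)$ independent of $|\tilde{\mathbf{h}}_{n,k}\mathbf{w}_n|^2\sim\mathrm{Beta}(1,M-1)$; a change of variables then gives the joint density $f_{U,g}(u,g)=\frac{(g-u)^{M-2}e^{-g}}{(M-2)!}$ on $0\le u\le g$, which is exactly the law against which $\Theta_{n,k}(a,b)=\mathbb{E}[\log_2(1+aU_{n,k}+bg_{n,k})]$ integrates.

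Next I would eliminate the interference. Conditioned on $(U_{n,k},g_{n,k})$, the map $y\mapsto\log_2(1+S_{n,k}^{(1)}U_{n,k}+y)-\log_2(1+S_{n,k}^{(2)}U_{n,k}+y)$ is convex and decreasing in $y\ge 0$ (as $S_{n,k}^{(1)}>S_{n,k}^{(2)}$), so Jensen's inequality lets me replace $Y_{n,k}$ by its conditional mean without increasing the rate, and monotonicity lets me replace that mean by an upper bound, both steps preserving a lower bound. To bound the conditional mean I would use two standard RVQ facts: $\mathbb{E}[\sin^2\theta_{n,k}]\le 2^{-B_{n,k}/(M-1)}=\delta_{n,k}$, and, since the zero-forcing constraint forces $\mathbf{w}_i\,(i\neq n)$ to lie in the nullspace of $\hat{\mathbf{h}}_{n,k}$ in which $\tilde{\mathbf{e}}_{n,k}$ is isotropic, $\mathbb{E}[|\tilde{\mathbf{e}}_{n,k}\mathbf{w}_i|^2]=\tfrac{1}{M-1}$. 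Because, given $\mathbf{h}_{n,k}$, the signal direction (driven by $\mathbf{w}_n$) is independent of the error statistics (driven by user $(n,k)$'s own codebook), these combine to $\mathbb{E}[Y_{n,k}\mid U_{n,k},g_{n,k}]\le S_{n,k}^{(3)}(M-1)^{-1}\delta_{n,k}\,g_{n,k}$. Substituting yields precisely $R_{n,k}\ge\Theta_{n,k}(S_{n,k}^{(1)},b_{n,k})-\Theta_{n,k}(S_{n,k}^{(2)},b_{n,k})$ with $b_{n,k}=S_{n,k}^{(3)}(M-1)^{-1}\delta_{n,k}$; for $k=1$ the subtracted term is $\Theta_{n,1}(0,b_{n,1})=\mathbb{E}_g[\log_2(1+b_{n,1}g)]$, whose standard closed form in the generalized exponential integrals $E_q$ gives the extra line in (\ref{eq:R_nk_LB1_1}).

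The main obstacle is the closed-form evaluation of $\Theta_{n,k}(a,b)=\int_0^\infty\!\int_0^g\log_2(1+au+bg)\frac{(g-u)^{M-2}e^{-g}}{(M-2)!}\,du\,dg$. I would expand $(g-u)^{M-2}$ by the binomial theorem (the source of the $\binom{M-2}{t}$ factors), integrate over $u$ first by parts on the logarithm, and then integrate the remaining $e^{-g}$-weighted terms over $g$. The inner antiderivatives generate the exponential integral $E_i$, and the nested $g$-integrals collapse into the auxiliary functions $\Psi(n,u,v)$ and $\tilde\Psi(n,u)$, whose case split on the sign of $n$ (including the $n=-1$ double series) reflects whether the integrand has a simple pole, a logarithmic singularity, or an $E_i^2$-type contribution, while the coupling of $a$ and $b$ within a single denominator $1+au+bg$ produces the Gauss-hypergeometric factor ${}_2F_1$ in $I_3$. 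Bookkeeping the three families $I_1,I_2,I_3$ and reassembling them with the binomial and $\tfrac{1}{(q-1)!}$ weights is where the bulk of the technical effort lies; the $k=1$ result is then the special evaluation at $a=0$. I expect verifying the $\Psi$ identities (especially the $n=-1$ and $n\le-2$ series) and tracking the constants through the double integration to be the most error-prone and laborious part of the argument.
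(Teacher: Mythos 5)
Your overall skeleton tracks the paper's: apply Jensen's inequality to the error--beamforming gains $\left| \tilde{\mathbf{e}}_{n,k} \mathbf{w}_i \right|^2$ (replacing them by $(M-1)^{-1}$), write $\log_2(1+\gamma_{n,k})$ as a difference of two log-expectations over the same randomness, exploit $|\mathbf{h}_{n,k}\mathbf{w}_n|^2 = \|\mathbf{h}_{n,k}\|^2\,|\tilde{\mathbf{h}}_{n,k}\mathbf{w}_n|^2$ with independent Gamma and Beta factors, and evaluate via $E_i$-type integrals. But there is a concrete gap in how you handle the quantization error, and it prevents your derivation from yielding the $\Theta_{n,k}(a,b)$ stated in the theorem. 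You replace the interference $Y_{n,k}$ by (an upper bound on) its conditional mean, i.e.\ you effectively substitute $\sin^2\theta_{n,k}\mapsto\delta_{n,k}$, leaving an interference term proportional to $\delta_{n,k}\,g_{n,k}$ with $g_{n,k}=\|\mathbf{h}_{n,k}\|^2\sim\mathrm{Gamma}(M,1)$, and you then identify $\Theta_{n,k}(a,b)=\mathbb{E}\left[\log_2(1+aU_{n,k}+bg_{n,k})\right]$. The paper instead keeps $\sin^2\theta_{n,k}$ \emph{random} and invokes the quantization-cell-approximation lemma of Yoo et al.: $(\|\mathbf{h}\|^2\sin^2\theta,\ \|\mathbf{h}\|^2\cos^2\theta)$ is distributed as $(\delta Y,\ X+(1-\delta)Y)$ with independent $X\sim\mathrm{Gamma}(1,1)$ and $Y\sim\mathrm{Gamma}(M-1,1)$. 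Hence the interference in the paper's bound is proportional to $Y$ alone, not to $g=X+Y$, and the stated closed form is $\Theta_{n,k}(a,b)=\mathbb{E}_Z\left[\Upsilon(aZ,aZ+b)\right]$ with $\Upsilon(\mu,\nu)=\mathbb{E}_{X,Y}\left[\log_2(1+\mu X+\nu Y)\right]$, i.e.\ $\mathbb{E}\left[\log_2(1+aU+bY)\right]$. Your candidate equals $\mathbb{E}_Z\left[\Upsilon(aZ+b,aZ+b)\right]=\mathbb{E}\left[\log_2(1+aU+b(X+Y))\right]$, which is strictly larger for $b>0$ and is a genuinely different function, so carrying out your double integral would not reproduce the $I_1,I_2,I_3$ formulas of the theorem.

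The mismatch is already visible in the $k=1$ case: your subtracted term $\mathbb{E}_g\left[\log_2(1+bg)\right]$ with $g\sim\mathrm{Gamma}(M,1)$ would produce $\sum_{q=1}^{M}E_q(\cdot)$, whereas (\ref{eq:R_nk_LB1_1}) contains $\sum_{q=1}^{M-1}E_q(\cdot)$, consistent with a $\mathrm{Gamma}(M-1,1)$ variable. Moreover, averaging out $\sin^2\theta_{n,k}$ is essentially the simplification the paper attributes to the prior work it criticizes in Remark \ref{remark:2}; by your convexity/monotonicity argument it still yields \emph{a} valid lower bound, but a looser one and not the one claimed. To prove the theorem as stated you must retain the randomness of the quantization error through the joint law $(\delta Y, X+(1-\delta)Y)$; only then do the three independent variables $X$, $Y$, $Z$ lead to the $\Upsilon\to\Theta$ evaluation and the $\Psi$, $\tilde{\Psi}$ machinery. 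Your computation of the joint density of $(U_{n,k},g_{n,k})$ and your justification of the Jensen steps are otherwise sound.
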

\begin{proof}
See Appendix \ref{proof:theorem_R_nk_LB}.
\end{proof}
\begin{remark}
As it was also observed by the authors in \cite{Q.Yang_2017}, the exact analytical performance of the multi-antenna NOMA systems based on linear ZFBF with limited CSI feeback model here is very difficult to obtain. We believe the actual analytical result (if it could be obtained) is at least as the same complicated as the bound obtained in \emph{Theorem} \ref{theorem:R_nk_LB}, if not more complicated. Thus, it is natural that any accurate analytical lower or upper bound should be very complicated. In spite of this, the theoretical result obtained in Theorem \ref{theorem:R_nk_LB} is able to avoid the time-consuming computer simulations in evaluating the system performance.
\end{remark}
\begin{remark}\label{remark:2}
It can be observed from (\ref{eq:R_nk_LB_temp1}) in \emph{Appendix} \ref{proof:theorem_R_nk_LB}, the channel model with limited feedback used in \cite{X.Chen_2017} in fact equivalently further employs Jensen's inequality on the random term $\sin^2\theta_{n,k}$ in (\ref{eq:R_nk_LB_temp1}), thus results in lower ergodic rate performance than that of any practical system. This point will also be verified by the numerical result of Fig. \ref{fig:plot_compare_joint_optimization} in Section \ref{sec:numerical_results}.
\end{remark}

\section{Joint Optimization of Feedback Bits and Power Allocation}\label{sec:Bit_Power_Allocations}

The signal strength can be enhanced and the MU interference can be reduced by increasing the accuracy of CSI quantization. The transmit power allocation to each user also has a great impact on the ergodic rate performance. Moreover, it is obvious that the feedback bits allocation and the power allocation interact with each other.
Therefore, it is essential to consider the joint optimization of the feedback bits and power allocation among the MUs for performance enhancement with the constraints on the total transmit power and the sum bandwidth of feedback channels of all users. 

\subsection{Feedback Design Given Power Allocation Among All Users}\label{subsec:feedback_allocations}
In this subsection, we first concentrate on optimizing the ESR by feedback bits allocation among all users given the power allocation among all users. The power allocation optimization will be considered in the next subsection. \cite{X.Chen_2017} proposed a indirect method that minimized the average sum power of inter-cluster interference of all users, which can not guarantee maximizing the ESR. In contrast to \cite{X.Chen_2017}, we consider directly maximizing the ESR of all users or equivalently minimizing ESR loss caused by CSI quantization.
The ergodic rate loss of user $(n,k )$ due to CSI quantization is defined as $\Delta R_{n,k}  \triangleq R_{n,k}^{ideal} - R_{n,k} $, where $R_{n,k}^{ideal} =\Bbb{E} \left[ \log_2 \left( \frac{1 + S_{n,k}^{(1)} \left| \mathbf{h}_{n,k} \mathbf{w}_n \right|^2}{1 + S_{n,k}^{(2)} \left| \mathbf{h}_{n,k} \mathbf{w}_n \right|^2} \right) \right]$ is the ergodic rate with perfect CSI. Since the actual analytical result of $R_{n,k} $ is unknown and the obtained analytical lower bound in Section \ref{sec:actual_performance} is too sophisticated for further processing, in the following theorem we first develop an upper bound on the ergodic rate loss with a relatively simpler form.
\begin{theorem}\label{theorem:R_nk_Loss_Upper}
$\Delta R_{n,k} $ $\forall n, k$ can be upper-bounded as $\Delta R_{n,k} \leq \Delta R_{n,k}^{UB1}$, where
\begin{eqnarray}\label{eq:Delta_R_nk_UB}
\hspace{-8mm}&&\Delta R_{n,k}^{UB1}  =  \log_2 \left( 1 + S_{n,k}^{(2)} + \Gamma \left( \frac{2M-1}{M-1} \right) 2^{-\frac{B_{n,k}}{M-1}}  S_{n,k}^{(3)} \right) \nonumber\\
\hspace{-8mm}&&\hspace{16mm}- \log_2(e) e^{\frac{1}{S_{n,k}^{(2)}}} E_1 \left( \small{\frac{1}{S_{n,k}^{(2)}}} \right).
\end{eqnarray}
Here $ S_{n,k}^{(2)}$ and $S_{n,k}^{(3)} $ are the same as defined in \emph{Theorem} \ref{theorem:R_nk_LB}.
\end{theorem}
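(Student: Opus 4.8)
The plan is to reduce $\Delta R_{n,k}$ to a single logarithmic expectation that can be controlled by Jensen's inequality together with the standard random-vector-quantization (RVQ) moment identities. First I would put both rates in a common form. Writing $g_{n,k}\triangleq|\mathbf{h}_{n,k}\mathbf{w}_n|^2$ for the effective beamforming gain and $I_{n,k}\triangleq\frac{d_{n,k}^{-\alpha}}{\sigma_{n,k}^2}\sin^2\theta_{n,k}\|\mathbf{h}_{n,k}\|^2\sum_{i\neq n}|\tilde{\mathbf{e}}_{n,k}\mathbf{w}_i|^2\sum_{l}P_{i,l}$ for the normalized inter-cluster interference, the post-SIC SINR in (\ref{eq:gamma_nk_gt1}) gives $1+\gamma_{n,k}=(1+S_{n,k}^{(1)}g_{n,k}+I_{n,k})/(1+S_{n,k}^{(2)}g_{n,k}+I_{n,k})$, while the ideal rate is exactly the case $I_{n,k}=0$. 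Hence $\Delta R_{n,k}=\mathbb{E}[\,f(0)-f(I_{n,k})\,]$ with $f(x)\triangleq\log_2(1+S_{n,k}^{(1)}g_{n,k}+x)-\log_2(1+S_{n,k}^{(2)}g_{n,k}+x)$.

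Second, I would exploit $S_{n,k}^{(1)}\ge S_{n,k}^{(2)}$ to obtain a clean single-log bound. Since the larger denominator $1+S_{n,k}^{(1)}g_{n,k}+x$ decays faster, $f'(x)\le0$, so $f$ is nonincreasing and $\Delta R_{n,k}\ge0$, as expected. More usefully, decomposing $f(0)-f(I_{n,k})$ as $[\log_2(1+S_{n,k}^{(2)}g_{n,k}+I_{n,k})-\log_2(1+S_{n,k}^{(2)}g_{n,k})]-[\log_2(1+S_{n,k}^{(1)}g_{n,k}+I_{n,k})-\log_2(1+S_{n,k}^{(1)}g_{n,k})]$ and discarding the second (nonnegative, since $I_{n,k}\ge0$) bracket yields
\[\Delta R_{n,k}\le\mathbb{E}\left[\log_2(1+S_{n,k}^{(2)}g_{n,k}+I_{n,k})\right]-\mathbb{E}\left[\log_2(1+S_{n,k}^{(2)}g_{n,k})\right].\]

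Third, I would bound the two remaining terms separately. For the interference-carrying term, concavity of the logarithm (Jensen) gives $\mathbb{E}[\log_2(1+S_{n,k}^{(2)}g_{n,k}+I_{n,k})]\le\log_2(1+S_{n,k}^{(2)}\mathbb{E}[g_{n,k}]+\mathbb{E}[I_{n,k}])$. Because $\mathbf{w}_n$ is built from the quantized directions of the \emph{other} clusters, it is independent of $\mathbf{h}_{n,k}$, and rotational invariance makes $g_{n,k}$ unit-mean exponential, so $\mathbb{E}[g_{n,k}]=1$. For $\mathbb{E}[I_{n,k}]$ I would use that the channel norm is independent of the channel direction ($\mathbb{E}[\|\mathbf{h}_{n,k}\|^2]=M$); that $\tilde{\mathbf{e}}_{n,k}$ is isotropic in the $(M-1)$-dimensional nullspace of $\hat{\mathbf{h}}_{n,k}$, which contains every $\mathbf{w}_i$ with $i\neq n$ (whence $\mathbb{E}[|\tilde{\mathbf{e}}_{n,k}\mathbf{w}_i|^2]=\tfrac{1}{M-1}$ by conditioning); and the RVQ identity $\mathbb{E}[\sin^2\theta_{n,k}]=\Gamma(\tfrac{M}{M-1})\,\Gamma(2^{B_{n,k}}+1)/\Gamma(2^{B_{n,k}}+\tfrac{M}{M-1})$. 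Together these give $\mathbb{E}[I_{n,k}]=\tfrac{M}{M-1}\mathbb{E}[\sin^2\theta_{n,k}]\,S_{n,k}^{(3)}$, and using $\tfrac{M}{M-1}\Gamma(\tfrac{M}{M-1})=\Gamma(\tfrac{2M-1}{M-1})$ reproduces the first term of $\Delta R_{n,k}^{UB1}$. For the interference-free term, since $g_{n,k}\sim\mathrm{Exp}(1)$ the standard integral $\int_0^\infty\ln(1+ax)e^{-x}\,dx=e^{1/a}E_1(1/a)$ yields $\mathbb{E}[\log_2(1+S_{n,k}^{(2)}g_{n,k})]=\log_2(e)\,e^{1/S_{n,k}^{(2)}}E_1(1/S_{n,k}^{(2)})$, which is the subtracted term and vanishes correctly as $S_{n,k}^{(2)}\to0$ for $k=1$.

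The main obstacle is the last expectation: to obtain the \emph{exact} coefficient $\Gamma(\tfrac{2M-1}{M-1})$ rather than the looser factor $1$, I must replace the RVQ quantization error by the closed-form upper bound $\mathbb{E}[\sin^2\theta_{n,k}]\le\Gamma(\tfrac{M}{M-1})\,2^{-B_{n,k}/(M-1)}$, i.e. show $\Gamma(2^{B_{n,k}}+1)/\Gamma(2^{B_{n,k}}+\tfrac{M}{M-1})\le 2^{-B_{n,k}/(M-1)}$. This follows from log-convexity of $\Gamma$ (equivalently, $\Gamma(x+a)\ge\Gamma(x+1)\,x^{a-1}$ for $a\ge1$, which I would verify by checking that $\psi_0(x+a)-\ln x>0$ and is increasing in $a$, where $\psi_0$ is the digamma function). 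This is the one genuinely analytic step; everything else is algebraic rearrangement plus the RVQ moment identities already available from the derivation of Theorem \ref{theorem:R_nk_LB}.
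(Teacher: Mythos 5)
Your proof is correct and follows essentially the same route as the paper's: the same four-term decomposition of $\Delta R_{n,k}$, the same discarding of the nonnegative bracket $\log_2(1+S_{n,k}^{(1)}g_{n,k}+I_{n,k})-\log_2(1+S_{n,k}^{(1)}g_{n,k})$, Jensen's inequality on the interference-carrying term with the identical RVQ moments $\mathbb{E}[g_{n,k}]=1$, $\mathbb{E}[\|\mathbf{h}_{n,k}\|^2]=M$, $\mathbb{E}[|\tilde{\mathbf{e}}_{n,k}\mathbf{w}_i|^2]=\tfrac{1}{M-1}$, and the exact exponential-integral evaluation of $\mathbb{E}[\log_2(1+S_{n,k}^{(2)}g_{n,k})]$. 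The only (welcome) difference is that where the paper invokes the approximation $2^{B_{n,k}}\beta\bigl(2^{B_{n,k}},\tfrac{M}{M-1}\bigr)\approx\Gamma\bigl(\tfrac{M}{M-1}\bigr)2^{-B_{n,k}/(M-1)}$ from the limited-feedback literature, you upgrade it to a genuine inequality via a Gautschi-type/log-convexity bound on the Gamma ratio, which makes the stated upper bound rigorous rather than approximate at that step.
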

\begin{proof}
See Appendix \ref{proof:theorem_R_nk_Loss_Upper}.
\end{proof}

With \emph{Theorem} \ref{theorem:R_nk_Loss_Upper}, an upper bound on the ESR loss $\Delta R_{sum}$ can be obtained as
\begin{eqnarray}\label{eq:Delta_R_sum_UB}
\hspace{-12mm}&& \Delta R_{sum,UB} = \sum_{n=1}^N \sum_{k=1}^{K} \Delta R_{n,k}^{UB1} .
\end{eqnarray}
Then, the feedback bits allocation optimization problem turns into the problem of minimizing $ \Delta R_{sum,UB} $.
Moreover, since the second term of (\ref {eq:Delta_R_nk_UB}) is irrelevant to $B_{n,k}$, the problem can be equivalently formulated as
\begin{eqnarray}\label{problem:bit_allocation_prod}
\min_{\{B_{n,k} \} } &&\hspace{-5mm} \prod_{n=1}^N \prod_{k=1}^{K} \left( 1 + S_{n,k}^{(2)} + \Gamma \left( \frac{2M-1}{M-1} \right) 2^{-\frac{B_{n,k}}{M-1}}  S_{n,k}^{(3)} \right)  \nonumber\\
\text{s.t.} &&\hspace{-5mm} \sum_{n=1}^{N} \sum_{k=1}^{K} B_{n,k} \leq B, \\
&&\hspace{-5mm} B_{n,k} \in \mathcal{N}, ~\forall n, k. \nonumber
\end{eqnarray}
Without non-negative integer constraint, the solution to (\ref{problem:bit_allocation_prod}) is obtained in the following theorem.
\begin{theorem}\label{theorem:B_without_integer}
The solution to the problem (\ref{problem:bit_allocation_prod}) without non-negative integer constraint is given by
\begin{eqnarray}\label{eq:B_nk}
\hspace{-8mm}B_{n,k}^{\star} &=& \frac{B}{NK} + (M-1)\log_2 \left( \frac{S_{n,k}^{(3)}}{1 + S_{n,k}^{(2)}} \right) \nonumber\\
\hspace{-8mm}&&+~ \frac{(M-1)}{NK} \sum_{p=1}^N \sum_{q=1}^K \log_2 \left( \frac{1 + S_{p,q}^{(2)}}{S_{p,q}^{(3)}} \right).
\end{eqnarray}
\end{theorem}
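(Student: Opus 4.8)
The plan is to exploit the separable structure of problem~(\ref{problem:bit_allocation_prod}) after a logarithmic transformation. Because the logarithm is monotonically increasing and every factor is strictly positive, minimizing the product is equivalent to minimizing
\[
f = \sum_{n=1}^N \sum_{k=1}^K \ln\!\left( a_{n,k} + b_{n,k}\, 2^{-\frac{B_{n,k}}{M-1}} \right),
\]
subject to the linear budget $\sum_{n,k} B_{n,k} \leq B$, where I abbreviate $a_{n,k} = 1 + S_{n,k}^{(2)}$ and $b_{n,k} = \Gamma\!\left(\frac{2M-1}{M-1}\right) S_{n,k}^{(3)}$. The objective is now a sum of single-variable terms coupled only through the budget.

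First I would establish that the relaxed problem (dropping the integer constraint) is convex, so that the KKT conditions are both necessary and sufficient for the global minimum. A direct computation shows each summand $g(B) = \ln\!\left(a + b\,2^{-B/(M-1)}\right)$ has second derivative
\[
g''(B) = \frac{(\ln 2)^2}{(M-1)^2}\cdot \frac{a\, b\, 2^{-B/(M-1)}}{\left(a + b\, 2^{-B/(M-1)}\right)^2} > 0
\]
since $a,b>0$; hence $f$ is convex and the feasible set is a polytope. I would also note that each $g$ is strictly decreasing in $B$, so the budget constraint is active at the optimum, i.e.\ $\sum_{n,k} B_{n,k} = B$.

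Next I would form the Lagrangian $L = f + \lambda\big(\sum_{n,k} B_{n,k} - B\big)$ and impose stationarity. Setting $\partial L/\partial B_{n,k} = 0$ gives
\[
\frac{b_{n,k}\, 2^{-B_{n,k}/(M-1)}}{a_{n,k} + b_{n,k}\, 2^{-B_{n,k}/(M-1)}} = \frac{\lambda (M-1)}{\ln 2}, \quad \forall\, n,k,
\]
so this ratio is the same across all users. Rearranging shows $b_{n,k}\,2^{-B_{n,k}/(M-1)} = \kappa\, a_{n,k}$ for a single constant $\kappa$ independent of $(n,k)$, which solves to $B_{n,k} = (M-1)\log_2\!\left(b_{n,k}/a_{n,k}\right) - (M-1)\log_2 \kappa$; this is a water-filling-type structure in which every user equalizes its marginal rate loss.

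Finally I would eliminate $\kappa$ using the active budget: summing over all $(n,k)$ and setting the total equal to $B$ solves for $(M-1)\log_2\kappa$ in terms of $B$ and $\sum_{p,q}\log_2(b_{p,q}/a_{p,q})$. Substituting back expresses $B_{n,k}^\star$ as $B/(NK)$ plus the per-user term $(M-1)\log_2(b_{n,k}/a_{n,k})$ minus its $NK$-fold average. The closing step is purely algebraic: upon substituting $a_{n,k}$ and $b_{n,k}$, the additive constant $\log_2\Gamma\!\left(\frac{2M-1}{M-1}\right)$ appears once in the per-user term and is subtracted back exactly once through the average, so it cancels, yielding precisely~(\ref{eq:B_nk}). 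The only real care needed---more bookkeeping than genuine obstacle---is tracking this cancellation of the Gamma factor and justifying that the constraint is tight; the convexity argument is what guarantees that the stationary point is the true global optimum rather than merely a critical point.
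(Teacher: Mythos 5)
Your proposal is correct and follows essentially the same route as the paper: both establish convexity (the paper by asserting log-convexity of the product, you by differentiating the log-transformed objective twice), impose Lagrangian stationarity to find that the ratio $b_{n,k}2^{-B_{n,k}/(M-1)}/\bigl(a_{n,k}+b_{n,k}2^{-B_{n,k}/(M-1)}\bigr)$ is equalized across users, and eliminate the multiplier via the active budget constraint $\sum_{n,k}B_{n,k}=B$. Working with the logarithm of the objective rather than the product itself is only a cosmetic difference, though it does make the convexity check, the tightness of the constraint, and the cancellation of the $\Gamma\left(\frac{2M-1}{M-1}\right)$ factor more explicit than in the paper's version.
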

\begin{proof}
See Appendix \ref{proof:theorem_B_without_integer}.
\end{proof}
\begin{remark}
By employing the arithmetic-geometric mean inequality, we have
\begin{eqnarray}\label{eq:comparision}
\hspace{-10mm}&&\prod_{n=1}^N \prod_{k=1}^{K} \left( 1 + S_{n,k}^{(2)} + \Gamma \left( \frac{2M-1}{M-1} \right) 2^{-\frac{B_{n,k}}{M-1}}  S_{n,k}^{(3)} \right) \nonumber\\
\hspace{-10mm}&&\leq \Bigg[ \frac{1}{NK} \sum_{n=1}^N \sum_{k=1}^{K} \Bigg( 1 + S_{n,k}^{(2)} + \Gamma \left( \frac{2M-1}{M-1} \right)\nonumber\\
\hspace{-10mm}&&\hspace{50mm}\times 2^{-\frac{B_{n,k}}{M-1}}  S_{n,k}^{(3)} \Bigg) \Bigg]^{NK} \nonumber\\
\hspace{-10mm}&& = \Bigg[ 1+ \frac{1}{NK} \Bigg( \sum_{n=1}^N \sum_{k=1}^{K} S_{n,k}^{(2)}   +  \Gamma \left( \frac{2M-1}{M-1} \right)\nonumber\\
\hspace{-10mm}&&\hspace{35mm}\times \sum_{n=1}^N \sum_{k=1}^{K} 2^{-\frac{B_{n,k}}{M-1}}  S_{n,k}^{(3)} \Bigg) \Bigg]^{NK}.
\end{eqnarray}
We notice the term $ \sum_{n=1}^N \sum_{k=1}^{K} 2^{-\frac{B_{n,k}}{M-1}}  S_{n,k}^{(3)} $ in the right hand side (RHS) of (\ref{eq:comparision}) is exactly the objective function of the optimal bits allocation problem formulated in \cite[(24)]{X.Chen_2017}. Since the other terms in the RHS of (\ref{eq:comparision}) is irrelevant to $\{B_{n,k}\}$, this illustrates minimizing the objective function of \cite{X.Chen_2017} is equivalent to minimizing an upper bound of the objective function of our problem. Thus, our algorithm is superior to that in \cite{X.Chen_2017} in general. We will also show the superiority of our algorithm by the numerical results in Section \ref{sec:numerical_results}.
\end{remark}

For some system parameter settings, (\ref{eq:B_nk}) may give negative values (particularly for user $(n,k)$ with $k>1$ in each cluster). In this case, our algorithm temporally allocates $0$ bit to these users.
Then, the bits allocation algorithm given by \emph{Theorem} \ref{theorem:B_without_integer} is performed again for the remaining users. This algorithm operates in the recurrent manner until the bits allocation for all remaining users satisfies $B_{n,k} \geq 0$. Without causing confusion, we still denote the resulting bits allocation of user $(n, k)$ after the above operation as $B_{n,k}^{\star} $. Moreover, there is integer constraint on each $B_{n,k}$. Thus, based on the coarse bits allocation results obtained above, we seek for the practical solution by \emph{Algorithm} \ref{table:algorithm_table}.

\begin{algorithm}[t]
\caption{The Dynamic Programming Algorithm to Find Practical Feedback Bits Allocation} 
\label{table:algorithm_table}
\hspace*{0.02in} {\bf Input:} 
$\mathbf{b}_{rd}$, $\mathbf{r}_v(i)$, $B_{re}$\\
\hspace*{0.02in} {\bf Output:} 
$\mathbf{b}_{int}$
\begin{algorithmic}[1]
\State $\mathbf{D}_p = \text{zeros}(NK+1, B_{re}+1)$;
\State $\mathbf{B}_{rc} = \text{zeros}(NK, NK+1, B_{re}+1)$;
\For{$i = 2$ to $NK+1$}
    \For{$j = 2$ to $B_{re}+1$}
        \If{$\mathbf{D}_p(i-1,j) > \mathbf{D}_p(i-1,j-1) + \mathbf{r}_v(i-1)$}
            \State $\mathbf{B}_{rc}(:,i,j) = \mathbf{B}_{rc}(:, i-1, j)$;
            \State $\mathbf{D}_p(i,j) = \mathbf{D}_p(i-1,j)$;
        \Else
            \State $\mathbf{B}_{rc}(:,i,j) = \mathbf{B}_{rc}(:, i-1, j-1)$;
            \State $\mathbf{B}_{rc}(i,i,j) = 1$;
            \State $\mathbf{D}_p(i,j) = \mathbf{D}_p(i-1,j-1) + \mathbf{r}_v(i-1)$;
        \EndIf
    \EndFor
\EndFor
\State \Return $\mathbf{b}_{int} = \mathbf{b}_{rd} + \mathbf{B}_{rc}(:, NK+1, B_{re}+1)$;
\end{algorithmic}
\end{algorithm}

First, let $\mathbf{b}_{int} \in \mathcal{N}^{NK \times 1}$ consist of the integer bits allocation to all users, where the $i$-th element $\mathbf{b}_{int}(i)$ is the number of bits allocated to user $(n',k')$ with the relationship between $i$ and $(n',k')$ given by $n'=\text{MOD}(i-1, N) + 1$ and $k' = \lceil i/N \rceil$. Here $\text{MOD}$ is the modular operation. Similarly, let $\mathbf{b}_{rd} \in \mathcal{N}^{NK \times 1}$ denote the bits allocation obtained as $\mathbf{b}_{rd} (i) = \lfloor\{B_{n',k'}^{\star} \} \rfloor$. In addition, let $\mathbf{r}_v  $ be $NK \times 1$ vector with the $i$-th element $\mathbf{r}_v(i) =  \Delta R_{n',k'}^{UB1} (\mathbf{b}_{rd}(i)) -  \Delta R_{n',k'}^{UB1} (\mathbf{b}_{rd}(i)+1)$ denoting the gain of the ergodic rate from allocating one extra bit to user $(n', k')$, where $\Delta R_{n',k'}^{UB1} $ is given by (\ref{eq:Delta_R_nk_UB}).
Let $B_{re} = B - \sum_{i=1}^{NK}\mathbf{b}_{rd} (i)$ denote the number of the remaining bits that need to be re-allocated.
Then, the problem to find the practical optimal solution can be formulated as
\begin{eqnarray}\label{problem:DP}
\max_{\mathbf{b}_{int}} && \left[\mathbf{b}_{int} - \mathbf{b}_{rd} \right]^T \mathbf{r}_v \\
\text{s.t.} && \sum_{i=1}^{NK} \left[\mathbf{b}_{int}(i) - \mathbf{b}_{rd}(i) \right] \leq B_{re}, \nonumber\\
&& \left[\mathbf{b}_{int}(i) - \mathbf{b}_{rd}(i) \right] \in \{0,1 \}.\nonumber
\end{eqnarray}
The problem in (\ref{problem:DP}) is a classical $0/1$ Knapsack Problem\cite{lew2006dynamic} and can be solved by the dynamic programming algorithm. The detailed steps of the algorithm is provided in \emph{Algorithm} \ref{table:algorithm_table}. Finally, we obtain the bits allocation result with integer constraint given by $B_{n,k}^{int \star} = \mathbf{b}_{int}(i)$ with $i = N(k-1) + n$.
\begin{remark}
The overall complexity of the dynamic programming algorithm in \emph{Algorithm} \ref{table:algorithm_table} is mainly determined by the complexity of at each iteration and the number
of iterations required. The computations within each iteration are just a few scalar additions. Moreover, the time complexity of the dynamic programming algorithm shown in \emph{Algorithm} \ref{table:algorithm_table} is $O(N K B_{re})$, and it is easy to see $B_{re} \leq \lceil NK / 2 \rceil$. Therefore, the time complexity of the algorithm is a polynomial of the scale of the system, i.e., the parameters $N, K$. Thus, this is a very low-complexity algorithm.
\end{remark}

\subsection{Power Allocation}\label{subsec:power_allocation}
The power allocation in NOMA systems is more complex than that in OMA systems.
For the power allocation within one user cluster, according to the principle of NOMA, on the one hand the power allocation usually follows the criterion that $P_{n,1} \leq \cdots \leq P_{n,K}$ in order to facilitate SIC and guarantee certain level of quality of service (QoS) requirements of the users\footnote{Specifically, certain level of ESR of each user needs to be achieved by power allocation in this paper.} with poor channel conditions\cite{ZhiguoDing2016}.
On the other hand, in order to maximize the ESR, the power allocation should follow the criterion that $P_{n,1} \geq \cdots \geq P_{n,K}$.
Therefore, \cite{X.Chen_2017} proposed equal power allocation within a cluster to satisfy the above two criterions simultaneously and to reduce the complexity of the algorithm.
We will follow the same scheme, i.e., $P_{n,k} = P_n / K$, where $P_n$ is the total power allocated to the users in the $n$-th cluster.
For the power allocation among the clusters, recall that power allocation and bits allocation actually \emph{interact with each other}. Thus, in contrast to the intuitive scheme in \cite{X.Chen_2017} which was based on a given feedback bits allocation (e.g., equal feedback bit allocation among all users in the numerical result of \cite{X.Chen_2017}), we propose to \emph{jointly} optimize these two parameters by maximizing the system ESR under the premise of $P_{n,k} = P_n / K$ $\forall n, k$.

For the tractability, we develop in Appendix \ref{appendix:LB2} another analytical lower bound of $R_{n,k}$ $\forall n, k$ as
\begin{eqnarray}\label{eq:R_nk_LB2}
R_{n,k} &\geq& R_{n,k }^{LB2} \triangleq  \Delta R_{n,k }^{LB2}   + \tilde{R}^{LB2}_{n,k}
\end{eqnarray}
with $ \Delta R_{n,k }^{LB2}  = \Bbb{E} \left[ \log_2 \left( \frac{1 + \bar{S}_{n,k}^{(1)} \left| \mathbf{h}_{n,k} \mathbf{w}_n \right|^2}{1 + \bar{S}_{n,k}^{(1)} M^{-1} ||\mathbf{h}_{n,k}||^2}  \right) \right] $, which is a negative number independent with bits allocation and no smaller than the constant $-\log_2(e) \mathbf{C} \approx -0.8327$ for any system parameters\footnote{$ \Delta R_{n,k }^{LB2}$ can be obtained exactly in closed-form. However, since it is not considered further for power allocation problem, we will not present the result.}, where $\mathbf{C}$ is the Euler’s constant \cite[8.367.2]{Gradshteyn2007}. $ \Delta R_{n,k }^{LB2} $ changes very slowly with $P_n$, thus can be identified $P_n$-independent for power allocation. In addition,
\vspace{-1mm}
\begin{eqnarray}\label{eq:tilde_R_nk_LB}
\hspace{-11mm}&&\tilde{R}^{LB2}_{n,k} = \nonumber\\
\hspace{-11mm}&& \log_2 \left( 1 + \frac{ \frac{d_{n,k}^{-\alpha} P_{n}}{\sigma_{n,k}^2 K} }{\frac{M}{M-1} + \bar{S}_{n,k}^{(2)} + \Gamma \left( \frac{2M-1}{M-1} \right) 2^{-\frac{B_{n,k}}{M-1}} \bar{S}_{n,k}^{(3)}} \right).
\end{eqnarray}
Here, $\bar{S}_{n,k}^{(1)} = \frac{d_{n,k}^{-\alpha} k P_n}{\sigma_{n,k}^2 K}$, $\bar{S}_{n,k}^{(2)} = \frac{d_{n,k}^{-\alpha} (k-1) P_n}{\sigma_{n,k}^2 K}$ and $\bar{S}_{n,k}^{(3)} = \frac{d_{n,k}^{-\alpha}}{\sigma_{n,k}^2} \sum_{i=1,i \neq n}^{N} P_{i} = \frac{d_{n,k}^{-\alpha}}{\sigma_{n,k}^2} (P - P_n)$ are obtained by substituting $P_{n,k} = P_n / K$ into $S_{n,k}^{(1)}$, $S_{n,k}^{(2)}$ and $S_{n,k}^{(3)}$ respectively. Specifically, $\bar{S}_{n,k}^{(2)} =0$ for $k=1$.

We find it still very complicated to take $\sum_{n=1}^{N}\sum_{k=1}^{K}\tilde{R}^{LB2}_{n,k} $ as the objective function to optimize power allocation. Thus, we will further simplify the problem.
We can show, with \emph{perfect} CSIT and $P_{n,k} = P_n / K$, $R_{n,k} $ for $k > 1$ can be upper bound as
\begin{eqnarray}
\hspace{-7mm}&&R_{n,k} = \log_2 \left( 1 + \frac{\frac{d_{n,k}^{-\alpha} P_{n}}{\sigma_{n,k}^2 K} |\mathbf{h}_{n,k} \mathbf{w}_n|^2}
{1 + \frac{(k-1) d_{n,k}^{-\alpha} P_{n}}{\sigma_{n,k}^2 K} |\mathbf{h}_{n,k} \mathbf{w}_n|^2} \right) \nonumber \\
\hspace{-7mm}&&< \log_2 \left( 1 + \frac{\frac{d_{n,k}^{-\alpha}}{\sigma_{n,k}^2 K} |\mathbf{h}_{n,k} \mathbf{w}_n|^2}
{\frac{(k-1) d_{n,k}^{-\alpha}}{\sigma_{n,k}^2 K} |\mathbf{h}_{n,k} \mathbf{w}_n|^2} \right) = \log_2 \left( 1 + \frac{1}{k-1} \right), \nonumber
\end{eqnarray}
which, compared to $R_{n,1}$, is a very limited amount independent with power allocation and decreasing with the user index. Moreover, the ESR of these users in each cluster is bounded as $\sum_{k=2}^K R_{n,k} < \log_2(K)$.
The corresponding rates with limited CSI feedback must be even smaller. Thus, to further consider the ergodic rate of all users within each cluster in optimization can achieve very little benefit of sum rate, but it take much more computation complexity to search for a local-optimal solution due to the non-convex property of the sum ergodic rate.
In addition, the equal power allocation within each cluster already guarantees certain transmission rate for these users.
Therefore, in the following we take the sum of the terms $\tilde{R}^{LB2}_{n,k} $ in (\ref{eq:tilde_R_nk_LB}) for $k=1$ as the objective function, i.e., $\tilde{R}_{sum}^{LB2}\; (1) \triangleq \sum_{n=1}^{N}\tilde{R}^{LB2}_{n,1} $.

Let $P_{n} = \phi_{n} P$, where the coefficients $ \phi_{n} $ ($n=1, 2, \cdots, N$) satisfy $\sum_{i=1}^N \phi_{n} = 1 $. Substituting the result of the bits allocation without integer constraint given by (\ref{eq:B_nk}) into the expression of $ \tilde{R}^{LB2}_{n,k} $, after some manipulations we have
\begin{eqnarray} \label{eq:tilde_R_nk_LB_2}
\tilde{R}^{LB2}_{n,k}  (\left\{\phi_n \right\}) = \log_2 \left( 1 +\frac{d_{n,k}^{-\alpha} \phi_{n} P}{K \sigma_{n,k}^2 D_{n,k}}
 \right),
\end{eqnarray}
where $D_{n,k} = \frac{M}{(M-1)} + \frac{d_{n,k}^{-\alpha} (k-1) \phi_n P}{\sigma_{n,k}^2 K} \\+ \Gamma \left( \frac{2M-1}{M-1} \right) 2^{-\frac{B}{NK(M-1)}} \left( \prod_{p=1}^N \prod_{q=1}^K \left( \frac{d_{p,q}^{-\alpha}}{\sigma_{p,q}^2} \right)^{\frac{1}{NK}} \right)
\\ \times \left( 1 + \frac{d_{n,k}^{-\alpha} (k-1) \phi_n P}{\sigma_{n,k}^2 K} \right)\left( \prod_{p=1}^N \left( 1 - \phi_p \right)^{\frac{1}{N}} \right)
\\ \times  \left( \prod_{p=1}^N \prod_{q=1}^K \left( \frac{1}{P} + \frac{d_{p,q}^{-\alpha} (q-1) \phi_p}{\sigma_{p,q}^2 K} \right)^{-\frac{1}{NK}}\right)$.
Then, our power allocation problem becomes
\begin{eqnarray}\label{eq:power_optimization}
\min_{\left\{\phi_n \right\}} && - \tilde{R}_{sum}^{LB2}\; (1)  =  - \sum_{n=1}^{N}\tilde{R}^{LB2}_{n,1} (\left\{\phi_n \right\}) \\
\text{s.t.} && \sum_{n=1}^{N} \phi_n \leq 1,  ~ \phi_{n} \geq 0. \nonumber
\end{eqnarray}
It is not difficult to check that the problem in (\ref{eq:power_optimization}) is still a non-convex optimization problem. Thus, we will seek for a sub-optimal (local-optimal) solution, which is given by the following theorem.
\begin{theorem}\label{theorem:power_alloc}
Given $N$ and when transmit power $P$ is large enough to support the $N$ clusters of NOMA users, a sub-optimal solution to the problem (\ref{eq:power_optimization}) is given by
\begin{eqnarray}\label{eq:phi_n_res_final}
\hspace{-10mm}&&\phi_n^{\star} = \frac{1}{N} - \frac{K}{N} \Bigg[ \frac{M}{(M-1)P} + \Gamma \left( \frac{2M-1}{M-1} \right) \frac{2^{-\frac{B}{NK(M-1)}}}{P}\nonumber\\
\hspace{-10mm}&&\times  \bigg( \prod_{p=1}^N \prod_{q=1}^K \left( \frac{d_{p,q}^{-\alpha}}{\sigma_{p,q}^2} \right)^{\frac{1}{NK}} \bigg) C^{\star} \Bigg] \sum_{i=1,i \neq n}^N \bigg( \frac{\sigma_{n,1}^2}{d_{n,1}^{-\alpha}} - \frac{\sigma_{i,1}^2}{d_{i,1}^{-\alpha}} \bigg) \hspace{-0.5mm},
\end{eqnarray}
where $C^{\star}$ is the solution of the following equation of variable $C$ given as
\begin{eqnarray}\label{eq:equation_C}
\hspace{-7mm}&&C =\\
\hspace{-7mm}&&P^{\frac{1}{K}} \left( \prod_{p=1}^N \left( 1 - \phi_p^{\star} \right)^{\frac{1}{N}} \right)  \prod_{p=1}^N \prod_{q=2}^K \Big( \frac{1}{P} + \frac{d_{p,q}^{-\alpha} (q-1) \phi_p^{\star}}{\sigma_{p,q}^2 K} \Big)^{-\frac{1}{NK}} \nonumber
\end{eqnarray}
with $\phi_p^{\star}$ $ (p=1,2, \cdots, N)$ being a function of $C$ given by (\ref{eq:phi_n_res_final}). The desired solution of $C^{\star} $ can be obtained by solving the equation of $C$ in (\ref{eq:equation_C}) with any numerical method. Moreover, the power allocation satisfies $\phi_p^{\star} \geq \phi_2^{\star} \geq \cdots \geq \phi_N^{\star}  $.
\end{theorem}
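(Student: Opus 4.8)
The plan is to exploit a simplification that is special to $k=1$. Setting $k=1$ annihilates the intra-cluster interference contribution $\frac{d_{n,k}^{-\alpha}(k-1)\phi_n P}{\sigma_{n,k}^2 K}$ everywhere it appears in $D_{n,k}$, so the denominator $D_{n,1}$ becomes \emph{independent of the cluster index} $n$. Moreover, isolating from each cluster the $q=1$ factor $\big(\tfrac{1}{P}\big)^{-1/(NK)}$ in the last product of $D_{n,k}$ produces exactly a factor $P^{1/K}$, so the common value $D_{n,1}\equiv D_1$ can be written as $D_1=\frac{M}{M-1}+\Gamma\!\left(\frac{2M-1}{M-1}\right)2^{-B/(NK(M-1))}\big(\prod_{p,q}(d_{p,q}^{-\alpha}/\sigma_{p,q}^2)^{1/(NK)}\big)C$, where $C$ is precisely the auxiliary quantity defined in (\ref{eq:equation_C}). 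With this substitution the objective of (\ref{eq:power_optimization}) collapses to $-\sum_{n=1}^N \log_2\!\big(1+\frac{d_{n,1}^{-\alpha}\phi_n P}{K\sigma_{n,1}^2 D_1}\big)$, a classical sum-of-logarithms problem in which $D_1$ acts as a common effective noise floor shared by all clusters.

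The second step is to treat $C$ (hence $D_1$) as \emph{frozen}. For fixed $C$ the reduced objective is separable and convex in $\{\phi_n\}$, so I would apply the KKT conditions to the Lagrangian $-\sum_n\log_2(\cdot)+\lambda\big(\sum_n\phi_n-1\big)$. Stationarity yields the water-filling form $\phi_n=\big[\nu-\frac{K\sigma_{n,1}^2 D_1}{d_{n,1}^{-\alpha}P}\big]^+$. Here the large-$P$ hypothesis enters: once $P$ is large enough to support all $N$ clusters the correction term is $O\!\big(P^{-(K-1)/K}\big)$ and every $\phi_n$ stays strictly positive, so no clipping occurs and the budget constraint is active. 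Imposing $\sum_n\phi_n=1$ fixes the water level $\nu=\frac1N+\frac{KD_1}{NP}\sum_m \frac{\sigma_{m,1}^2}{d_{m,1}^{-\alpha}}$, and using $N\frac{\sigma_{n,1}^2}{d_{n,1}^{-\alpha}}-\sum_m\frac{\sigma_{m,1}^2}{d_{m,1}^{-\alpha}}=\sum_{m\neq n}\big(\frac{\sigma_{n,1}^2}{d_{n,1}^{-\alpha}}-\frac{\sigma_{m,1}^2}{d_{m,1}^{-\alpha}}\big)$ together with the expression for $D_1$ reproduces exactly (\ref{eq:phi_n_res_final}).

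The third step restores self-consistency. Since each $\phi_n^{\star}$ depends on $C$ through $D_1$, substituting the closed form (\ref{eq:phi_n_res_final}) back into the definition of $C$ produces the single scalar fixed-point equation (\ref{eq:equation_C}), which I would simply declare solvable by any root-finding routine. The ordering $\phi_1^{\star}\ge\phi_2^{\star}\ge\cdots\ge\phi_N^{\star}$ then follows from the clustering rule (\ref{eq:cluster_criteria1}) with $i=1$, which makes $t_n:=\sigma_{n,1}^2/d_{n,1}^{-\alpha}$ non-decreasing in $n$; because the bracketed prefactor in (\ref{eq:phi_n_res_final}) is positive and $\sum_{m\neq n}(t_n-t_m)=Nt_n-\sum_m t_m$ increases with $n$, the power share $\phi_n^{\star}$ is decreasing.

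I expect the main obstacle to be \emph{justifying the freezing of $C$}. The original problem is genuinely non-convex precisely because $D_1$ depends on every $\phi_p$; holding $C$ constant during differentiation discards the gradient terms $\partial D_1/\partial\phi_n$, so the resulting point is only a stationary, locally-optimal solution rather than the global optimum, which is why the theorem claims only a sub-optimal solution. The honest way to present this is either as an alternating/fixed-point scheme (solve the frozen-$C$ water-filling, update $C$ from the resulting shares, iterate), whose fixed point satisfies the stated equations, or by arguing that the neglected derivatives contribute only $O(1/P)$ relative to the retained terms, so that under the large-$P$ hypothesis the stationarity conditions of the frozen problem agree with those of the full problem to leading order. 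Checking that the frozen subproblem is convex (so its KKT point is its unique maximizer) is routine; pinning down the precise sense in which the fixed point is locally optimal for the original non-convex objective is the delicate part.
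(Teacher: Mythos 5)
Your proposal follows essentially the same route as the paper's proof: define the auxiliary quantity $C$ so that the $k=1$ denominator becomes a cluster-independent effective noise floor, freeze $C$ to make the problem convex, apply KKT to obtain the water-filling relation among the $\phi_n$, enforce the sum constraint to get (\ref{eq:phi_n_res_final}), back-substitute to obtain the scalar fixed-point equation (\ref{eq:equation_C}), and derive the ordering from the clustering criteria. Your explicit discussion of why freezing $C$ only yields a stationary (sub-optimal) point is a caveat the paper leaves implicit, but it does not change the argument.
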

\begin{proof}
See Appendix \ref{proof:theorem_power_alloc}.
\end{proof}
\begin{remark}\label{remark:power_alloc}
Generally, there are multiple roots with the equation (\ref{eq:equation_C}). The desired solution of $C^{\star} $ is chosen as the one makes the resulting $\phi_n^{\star} \in [0,1]$ practical. If there are more than one roots of $C^{\star} $, the desired one can be determined by checking the obtained $\tilde{R}_{sum}^{LB2} \; (1)$.
\end{remark}

According to the power allocation in Theorem \ref{theorem:power_alloc}, it is easy to see, given $N$ the total transmit power $P$ needs to be large enough to support $N$ clusters of NOMA users based on our framework described above. Otherwise, $\phi_N^{\star} $ will become a negative number. When $P$ is not large enough to support $N$ clusters of users, there are two ways to circumvent this problem: (a) One can reduce the number of clusters by one and redo the user clustering of the $NK$ users according to (\ref{eq:cluster_criteria1}) and (\ref{eq:cluster_criteria2}) until $\phi_N^{\star} >0 $ holds for the number of clusters $N$; or (b) remove the users in cluster $N$ from the time-frequency RB considered and multiplex these users on some other time-frequency RB.

\subsection{Asymptotic Results of the Proposed Power and Feedback Bits Allocation}\label{subsec:asymptotic results}
In order to provide insights for the system design, we now pursue an asymptotic analysis on our joint optimization of power and feedback bits allocation that are obtained above. First, the power allocation in the high power region and in the high CSI quantization accuracy scenario are given respectively as follows.
\begin{corollary}\label{corollary:asy_power}
When $B \rightarrow +\infty$ with fixed and finite $P $, our solution in \emph{Theorem} \ref{theorem:power_alloc} tends to be water-filling type solution for the system with perfect CSIT. Moreover, when $P \rightarrow + \infty$ with fixed and finite $B$, our power allocation solution tends to be equal power allocation among all users.
\end{corollary}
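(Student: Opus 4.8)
The plan is to read both limits off the closed-form solution (\ref{eq:phi_n_res_final}) of \emph{Theorem}~\ref{theorem:power_alloc}. The only real obstacle is that the factor in square brackets of (\ref{eq:phi_n_res_final}), which multiplies the fixed constant $\sum_{i\neq n}\big(\frac{\sigma_{n,1}^2}{d_{n,1}^{-\alpha}}-\frac{\sigma_{i,1}^2}{d_{i,1}^{-\alpha}}\big)$, contains the implicitly-defined quantity $C^{\star}$ from (\ref{eq:equation_C}). I would therefore first pin down how $C^{\star}$ behaves in each regime, then take the limit of $\phi_n^{\star}$ term by term, and finally translate the limiting $\{\phi_n^{\star}\}$ into the per-user powers $P_{n,k}=\phi_n P/K$.

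For the high-accuracy regime $B\to+\infty$ with $P$ fixed and finite, note that $2^{-\frac{B}{NK(M-1)}}\to 0$ while $C^{\star}$ stays bounded: by (\ref{eq:equation_C}) it depends only on $P$ (finite) and on $\{\phi_p^{\star}\}\subset[0,1]$, and each factor $\big(\frac{1}{P}+\frac{d_{p,q}^{-\alpha}(q-1)\phi_p^{\star}}{\sigma_{p,q}^2 K}\big)^{-1/(NK)}$ is bounded because the base is at least $1/P>0$. Hence the second summand of the bracketed factor vanishes and (\ref{eq:phi_n_res_final}) collapses to $\phi_n^{\star}\to \frac1N-\frac{KM}{N(M-1)P}\sum_{i\neq n}\big(\frac{\sigma_{n,1}^2}{d_{n,1}^{-\alpha}}-\frac{\sigma_{i,1}^2}{d_{i,1}^{-\alpha}}\big)$. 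It then remains to show this coincides with water-filling for perfect CSIT. I would argue that, since $B_{n,1}\to+\infty$ and $\bar{S}_{n,1}^{(2)}=0$, the objective term (\ref{eq:tilde_R_nk_LB}) for $k=1$ tends to $\log_2\big(1+\frac{(M-1)d_{n,1}^{-\alpha}\phi_n P}{M\sigma_{n,1}^2 K}\big)$; maximizing $\sum_n$ of these under $\sum_n\phi_n=1$ by a Lagrange/KKT argument gives the water-filling form $\phi_n=\mu-\frac{KM\sigma_{n,1}^2}{(M-1)d_{n,1}^{-\alpha}P}$, and eliminating the water level $\mu$ via the sum constraint reproduces exactly the displayed limit.

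For the high-power regime $P\to+\infty$ with $B$ fixed and finite, the key step is the scaling of $C^{\star}$, which I expect to be of order $P^{1/K}$. Working under the ansatz that every $\phi_p^{\star}$ stays bounded away from $0$, each factor for $q\ge 2$ in (\ref{eq:equation_C}) satisfies $\frac{1}{P}+\frac{d_{p,q}^{-\alpha}(q-1)\phi_p^{\star}}{\sigma_{p,q}^2 K}\to \frac{d_{p,q}^{-\alpha}(q-1)\phi_p^{\star}}{\sigma_{p,q}^2 K}$, a positive constant, and $\prod_p(1-\phi_p^{\star})^{1/N}$ tends to a positive constant, so the right-hand side of (\ref{eq:equation_C}) grows as $P^{1/K}$. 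Substituting into the bracketed factor of (\ref{eq:phi_n_res_final}), its first summand $\frac{M}{(M-1)P}$ and its second summand (proportional to $C^{\star}/P$, hence of order $P^{1/K-1}=P^{-(K-1)/K}$) both vanish for $K\ge2$. Hence $\phi_n^{\star}\to\frac1N$ for every $n$, with deviation $O(P^{-(K-1)/K})$, and consequently $P_{n,k}=\phi_n P/K\to P/(NK)$ for all $(n,k)$, i.e. equal power among all users; this limit is consistent with the ansatz since $1/N>0$.

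The step I expect to be delicate is closing the self-consistency loop in the high-power case, namely ruling out a degenerate branch in which the smallest coefficient $\phi_N^{\star}\to0$ (recall the ordering $\phi_1^{\star}\ge\cdots\ge\phi_N^{\star}\ge0$ from \emph{Theorem}~\ref{theorem:power_alloc}), where the $p=N$ factors of (\ref{eq:equation_C}) could blow up and invalidate the estimates above. I would handle this by an exponent-counting argument: if $\phi_N^{\star}$ is of order $P^{-s}$ for some $s\in(0,1]$, the $p=N$, $q\ge2$ factors contribute at most an extra $P^{(K-1)/(NK)}$, so $C^{\star}=O\big(P^{1/K+(K-1)/(NK)}\big)=O\big(P^{(N+K-1)/(NK)}\big)$; since $(N-1)(K-1)>0$ for $N,K\ge2$, the exponent $(N+K-1)/(NK)$ is strictly below $1$, whence $C^{\star}/P\to0$ and the bracket still vanishes in every branch. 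This forces $\phi_N^{\star}\to1/N\neq0$, a contradiction that eliminates the degenerate branch and leaves $\phi_n^{\star}\to1/N$ as the unique limit. The case $N=1$ is trivial, since then $\phi_1^{\star}=1$ and equal power is already enforced by $P_{n,k}=P_n/K$.
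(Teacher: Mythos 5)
Your proposal is correct and follows essentially the same route as the paper: read both limits directly off the closed form (\ref{eq:phi_n_res_final}), with the $B\to+\infty$ case following from $2^{-\frac{B}{NK(M-1)}}\to 0$ and the $P\to+\infty$ case following from the scaling $C^{\star}=O(P^{1/K})$ so that $C^{\star}/P\to 0$. Your write-up is in fact more complete than the paper's, which simply asserts the boundedness of $C^{\star}$, states without derivation that the first limit is the water-filling solution, and does not address the degenerate branch where some $\phi_p^{\star}\to 0$ — all points you handle explicitly.
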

\begin{proof}
First, it can be observed from (\ref{eq:phi_n_res_final}) that, when $B \rightarrow +\infty$ with fixed and finite $P $, our solution turns to be $\phi_n^{\star} \rightarrow \frac{1}{N} - \frac{K}{NP}  \frac{M}{(M-1)} \sum_{i=1,i \neq n}^N \left( \frac{\sigma_{n,1}^2}{d_{n,1}^{-\alpha}} - \frac{\sigma_{i,1}^2}{d_{i,1}^{-\alpha}} \right)$, which is exactly the water-filling type solution for the system with perfect CSIT. Moreover,
when $P \rightarrow +\infty$ with finite $B$,
\begin{eqnarray}\label{eq:C_limit_P}
C^{\star} \rightarrow
\frac
{P^{\frac{1}{K}} \left( \prod_{p=1}^N \left( 1 - \phi_p^{\star} \right)^{\frac{1}{N}} \right)}
{\left[ \left(\prod_{p=1}^N {\phi_p^{\star}}\right)^{1- \frac{1}{K}}  \prod_{p=1}^N  \prod_{q=1}^K  \left( \frac{ (q-1) d_{p,q}^{-\alpha}}{\sigma_{p,q}^2 K} \right)^{\frac{1}{NK}}\right]} \hspace{-0.5mm}. \hspace{-1mm}
\end{eqnarray}
Then, the equal power allocation follows by substituting (\ref{eq:C_limit_P}) into (\ref{eq:phi_n_res_final}).
\end{proof}

\begin{theorem}\label{theorem:Bn_k_asy}
As the total transmit power $P$ increases from the medium power region to high power region, with finite total bandwidth of feedback channels $B$ and without integer constraint, the optimal number of feedback bits allocated to the nearest user $(n, 1)$ in each cluster is scaled approximately \emph{increasingly} at rate $\left(M -1 \right) \left(1- \frac{1}{K} \right) \log_2\left( P\right)$, whilst the optimal number of feedback bits allocated to the other users is scaled approximately \emph{decreasingly} at rate $\left(M -1 \right) \left(1- \frac{1}{K} \right) \log_2\left( P\right)$.
Specifically, as $P$ increases we have
\begin{eqnarray}\label{eq:Bn_1_asy}
\hspace{-0.8cm}&&B_{n, 1}^{\star} \approx \tilde{B}_{n, 1}^{\star} = \frac{B}{NK} + \left(M -1 \right) \left(1- \frac{1}{K} \right) \log_2\left( \frac{P}{NK} \right) \nonumber\\
\hspace{-0.8cm}&&\hspace{0.8cm} + (M-1) \log_2\left( \frac{d_{n,1}^{ - \alpha}}{\sigma_{n,1}^2} \right)
- \frac{M-1}{NK} \sum_{i =1}^{N } \log_2\left( \frac{d_{i,1}^{ - \alpha}}{\sigma_{i,1}^2} \right) \nonumber\\
\hspace{-0.8cm}&&\hspace{0.8cm} + \frac{M-1}{K}\sum_{l =1}^{K -1} \log_2(l),~ \forall n;\\
\label{eq:Bn_k_asy}
\hspace{-0.8cm}&&B_{n, k}^{\star} \approx \tilde{B}_{n, k}^{\star} = \frac{B}{NK} - \frac{M-1}{K} \log_2\left( \frac{P}{NK} \right)\nonumber\\
\hspace{-0.8cm}&&\hspace{0.8cm}- \frac{M-1}{NK} \sum_{i =1}^{N } \log_2\left( \frac{d_{i,1}^{ - \alpha}}{\sigma_{i,1}^2} \right) + \frac{M-1}{K}\sum_{l =1}^{K -1} \log_2(l) \nonumber\\
\hspace{-0.8cm}&&\hspace{0.8cm}-(M-1) \log_2(k-1), ~\text{for}~ k \geq 2~ \text{and} ~ \forall n.
\end{eqnarray}
\end{theorem}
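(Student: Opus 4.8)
The plan is to take the exact closed-form allocation of \emph{Theorem} \ref{theorem:B_without_integer} and track how every quantity entering (\ref{eq:B_nk}) behaves as $P$ grows. Substituting the equal-power-per-cluster scheme $P_{n,k}=P_n/K$ with $P_n=\phi_n P$ replaces the quantities $S_{n,k}^{(2)},S_{n,k}^{(3)}$ in (\ref{eq:B_nk}) by their barred counterparts $\bar{S}_{n,k}^{(2)}=\frac{d_{n,k}^{-\alpha}(k-1)\phi_n P}{\sigma_{n,k}^2 K}$ and $\bar{S}_{n,k}^{(3)}=\frac{d_{n,k}^{-\alpha}}{\sigma_{n,k}^2}(1-\phi_n)P$, with $\bar{S}_{n,1}^{(2)}=0$. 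The decisive observation is that both $\bar{S}^{(2)}$ and $\bar{S}^{(3)}$ are linear in $P$, so the nearest user and the remaining users behave very differently. For $k=1$ the first logarithm of (\ref{eq:B_nk}) is $(M-1)\log_2 \bar{S}_{n,1}^{(3)}$ and carries a full $\log_2 P$ term, whereas for $k\geq 2$ the high-power approximation $1+\bar{S}_{n,k}^{(2)}\to \bar{S}_{n,k}^{(2)}$ makes the ratio $\bar{S}_{n,k}^{(3)}/(1+\bar{S}_{n,k}^{(2)})$ converge to the $P$-independent constant $K(1-\phi_n)/[(k-1)\phi_n]$, so that first logarithm contributes no $\log_2 P$ at all.

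Next I would treat the averaging (third) term of (\ref{eq:B_nk}), which is common to every user, by splitting its double sum into the $q=1$ and $q\geq 2$ parts. Each $q=1$ summand is $-\log_2 \bar{S}_{p,1}^{(3)}$ and carries $-\log_2 P$, so the $q=1$ block supplies $-N\log_2 P$ ahead of the prefactor $\frac{M-1}{NK}$; each $q\geq 2$ summand converges to the $P$-independent constant $\log_2\!\big((q-1)\phi_p/[K(1-\phi_p)]\big)$. Adding the first logarithm to this averaging term then yields the claimed scalings: for $k=1$ the coefficient of $\log_2 P$ is $(M-1)-\frac{M-1}{K}=(M-1)\left(1-\frac{1}{K}\right)$ and is positive, while for $k\geq 2$ only the averaging term contributes and the coefficient is $-\frac{M-1}{K}$, exactly the rates multiplying $\log_2(P/NK)$ in (\ref{eq:Bn_1_asy}) and (\ref{eq:Bn_k_asy}).

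To fix the $P$-independent constants and recover the stated closed forms, I would invoke \emph{Corollary} \ref{corollary:asy_power}: for finite $B$ and $P\to+\infty$ the optimal power allocation converges to equal power, i.e.\ $\phi_n\to 1/N$ and $1-\phi_n\to (N-1)/N$. Substituting these limits, all the spurious factors of $(N-1)$ cancel: in the $k=1$ case the $\log_2\frac{N-1}{N}$ produced by $\bar{S}_{n,1}^{(3)}$ cancels against the $\log_2[K(N-1)]$ arising from the $q\geq 2$ constants, leaving the residual constant $-(M-1)\left(1-\frac{1}{K}\right)\log_2(NK)$ that is absorbed into the $\log_2\frac{P}{NK}$ factor of (\ref{eq:Bn_1_asy}); the identical cancellation in the $k\geq 2$ case leaves $-\frac{M-1}{K}\log_2\frac{P}{NK}$ together with the distance/noise average $-\frac{M-1}{NK}\sum_i\log_2\frac{d_{i,1}^{-\alpha}}{\sigma_{i,1}^2}$, the factorial offset $\frac{M-1}{K}\sum_{l=1}^{K-1}\log_2 l$, and the per-user term $-(M-1)\log_2(k-1)$ of (\ref{eq:Bn_k_asy}).

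The main obstacle is bookkeeping rather than any single hard estimate: one must carefully separate the genuinely $P$-dependent summands from the constant ones across the nested double sums, and then verify the exact cancellation of the $(N-1)$ and $\log_2 K$ factors once $\phi_n=1/N$ is imposed. A secondary subtlety is the circular coupling between the bits allocation (which is computed from the power allocation) and the power allocation (which is computed from the bits); this is what makes a direct large-$P$ expansion awkward, but \emph{Corollary} \ref{corollary:asy_power} decouples the two in the high-power regime and legitimizes substituting the equal-power limit at leading order.
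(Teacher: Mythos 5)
Your proposal is correct and follows essentially the same route as the paper: the paper's proof likewise invokes \emph{Corollary}~\ref{corollary:asy_power} to justify substituting $P_n = P/N$ into $\bar{S}_{n,k}^{(2)}$, $\bar{S}_{n,k}^{(3)}$ and (\ref{eq:B_nk}), leaving the rest as ``some manipulations.'' You have simply carried out those manipulations explicitly (the separation of $\log_2 P$-bearing terms from constants, and the cancellation of the $(N-1)$ and $\log_2 K$ factors into $\log_2\frac{P}{NK}$), and your bookkeeping checks out against (\ref{eq:Bn_1_asy}) and (\ref{eq:Bn_k_asy}).
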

\begin{proof}
According to \emph{Corollary} \ref{corollary:asy_power}, the power allocation of our joint optimization method tends to be equal power allocation in the high power region. Then, the theorem can be proved easily by substituting the result of $P_n= P/N$ into $\bar{S}_{n,k}^{(2)} $, $\bar{S}_{n,k}^{(3)} $ and (\ref{eq:B_nk}) with some manipulations.
\end{proof}
\begin{remark}
We see from this theorem that the feedback bits allocated to each user by our method varies differently as $P$ increases, whereas the feedback bits allocation by the method in \cite{X.Chen_2017} remains invariant as $P$ varies.
Moreover, this theorem shows, as $P$ increases from the medium power region to high power region, the nearest (strongest) user in each cluster is allocated more feedback bits than the other users in the cluster, and among these nearest users of all clusters, the stronger the user is the more feedback bits is allocated. In contrast, for the users other than the strongest user in each cluster (i.e., user $(n,k)$ for $k \geq 2$), the weaker the user is the more feedback bits is allocated. These insights can also be verified by the numerical results in Section \ref{sec:numerical_results}.
\end{remark}
Moreover, we have the following result for finite $B$.
\begin{corollary}
With finite $B$ and without integer constraint, as $P $ goes to $+\infty$, the optimal number of feedback bits allocated to the users other than the nearest user in each cluster reduces to zero, whilst the optimal number of feedback bits allocated to user $(n, 1) $ of each cluster converges to a finite number. Specifically, we have $B_{n,1}^{\star} \approx \min\{ \hat{B}_{n,1}^{\star}, \tilde{B}_{n,1}^{\star}  \}$ and $B_{n,k}^{\star} \approx \min\{ 0, \tilde{B}_{n,k}^{\star} \}$ for $k \geq 2$, where each $\tilde{B}_{n, k}^{\star} $ is defined by (\ref{eq:Bn_1_asy}) or (\ref{eq:Bn_k_asy}) in \emph{Theorem} \ref{theorem:Bn_k_asy} and
\begin{eqnarray}\label{eq:hat_Bn1}
\hspace{-7mm}&&\hat{B}_{n,1}^{\star}\\
\hspace{-7mm}&&= \frac{B}{N} + (M -1) \log_2 \left( \frac{d_{n,1}^{ - \alpha}}{\sigma_{n,1}^2} \right)
-  \frac{M-1}{N} \sum_{i =1}^{N } \log_2 \left( \frac{d_{i,1}^{ - \alpha}}{\sigma_{i,1}^2} \right) \hspace{-1mm}.\nonumber
\end{eqnarray}
\end{corollary}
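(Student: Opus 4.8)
The plan is to derive this corollary as the $P\to+\infty$ limit of \emph{Theorem} \ref{theorem:Bn_k_asy} combined with the recurrent truncation procedure described right after \emph{Theorem} \ref{theorem:B_without_integer}. The single observation that drives everything is that, for fixed finite $B$, the approximate allocations of \emph{Theorem} \ref{theorem:Bn_k_asy} diverge in opposite directions as $P$ grows: $\tilde{B}_{n,1}^{\star}$ in (\ref{eq:Bn_1_asy}) carries the term $(M-1)(1-1/K)\log_2(P/(NK))$ and hence tends to $+\infty$, whereas each $\tilde{B}_{n,k}^{\star}$ in (\ref{eq:Bn_k_asy}) for $k\geq 2$ carries $-\frac{M-1}{K}\log_2(P/(NK))$ and hence tends to $-\infty$. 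Because feedback bits must be non-negative, this already forces the recurrent procedure to assign $0$ bits to every user $(n,k)$ with $k\geq 2$ once $P$ is large enough, which is exactly the first claim and accounts for the clamping-to-zero form stated for $B_{n,k}^{\star}$.

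First I would make the equal-power reduction precise. By \emph{Corollary} \ref{corollary:asy_power}, as $P\to+\infty$ the power allocation converges to $P_n=P/N$, so I may substitute $P_{n,k}=P/(NK)$ everywhere. With this substitution the divergence rates above follow directly from (\ref{eq:Bn_1_asy}) and (\ref{eq:Bn_k_asy}), and I would record the threshold power at which each $\tilde{B}_{n,k}^{\star}$ ($k\geq 2$) crosses zero, so that beyond it all such users are dropped and $B_{n,k}^{\star}=0$.

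Having forced $B_{n,k}^{\star}=0$ for $k\geq 2$, I would re-run the allocation of \emph{Theorem} \ref{theorem:B_without_integer} on the surviving $N$ nearest users only, keeping the full budget $B$ and replacing $NK$ by $N$ in (\ref{eq:B_nk}). The two crucial simplifications are $S_{n,1}^{(2)}=0$ (the nearest user sees no intra-cluster interference ahead of it) and, under equal power, $S_{n,1}^{(3)}=\frac{d_{n,1}^{-\alpha}}{\sigma_{n,1}^2}(P-P_n)=\frac{d_{n,1}^{-\alpha}}{\sigma_{n,1}^2}\cdot\frac{(N-1)P}{N}$. Substituting these, the common factor $\log_2\!\big((N-1)P/N\big)$ appearing both in $(M-1)\log_2 S_{n,1}^{(3)}$ and in the averaged correction term cancels exactly, leaving the $P$-independent expression (\ref{eq:hat_Bn1}) for $\hat{B}_{n,1}^{\star}$. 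This simultaneously proves convergence of $B_{n,1}^{\star}$ to a finite limit and identifies that limit.

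Finally I would reconcile the two regimes into the stated form: in the medium-power range the $k\geq 2$ users still receive positive bits, so the nearest user tracks $\tilde{B}_{n,1}^{\star}$; once $\tilde{B}_{n,1}^{\star}$ grows past the finite cap $\hat{B}_{n,1}^{\star}$, equivalently once the other users have all been dropped, the nearest-user allocation saturates at $\hat{B}_{n,1}^{\star}$, giving $B_{n,1}^{\star}\approx\min\{\hat{B}_{n,1}^{\star},\tilde{B}_{n,1}^{\star}\}$. The main obstacle I anticipate is not any one computation but making the recurrent truncation rigorous: one must verify that the zero/positive partition stabilizes with exactly the $N$ nearest users (i.e.\ every $\hat{B}_{n,1}^{\star}$ stays non-negative, so no nearest user is itself clamped and triggers a further recursion), and that this single re-application of \emph{Theorem} \ref{theorem:B_without_integer} is the genuine fixed point of the recursion in the limit rather than merely one step of it. Pinning down the crossover power where $\tilde{B}_{n,1}^{\star}=\hat{B}_{n,1}^{\star}$, and thereby justifying the $\min$ (a hard switch) instead of a smooth interpolation, is the delicate part.
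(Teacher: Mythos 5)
Your proposal is correct and follows essentially the same route as the paper: the paper's own (two-sentence) proof likewise takes the $k\geq 2$ allocations to vanish in the limit and obtains (\ref{eq:hat_Bn1}) by imposing the sum constraint $\sum_{i=1}^{N}\tilde{B}_{i,1}^{\star}=B$ on the expressions in (\ref{eq:Bn_1_asy}), which is exactly the normalization you perform by re-running \emph{Theorem}~\ref{theorem:B_without_integer} on the $N$ surviving nearest users and observing the cancellation of the $\log_2 P$ terms. Your write-up is simply a more explicit account of the same argument; the caveats you raise about the recurrent truncation stabilizing are glossed over by the paper as well.
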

\begin{proof}
As $P \rightarrow +\infty$ with finite $B$, it is obvious that $B_{n,k}^{\star}\rightarrow 0 $ and $\tilde{B}_{n,k}^{\star}  \rightarrow 0$ for $k\geq 2$ and $\forall n$.
The result of $\hat{B}_{n,1}^{\star} $ in (\ref{eq:hat_Bn1}) is obtained by substituting the expression of $\tilde{B}_{i, 1}^{\star} $ in (\ref{eq:Bn_1_asy}) into the constraint $\sum_{ i=1}^{N} \tilde{B}_{i, 1}^{\star} = B$, which completes the proof of the corollary.
\end{proof}
\begin{remark}\label{remark:last}
It is easy to check that $\{\tilde{B}_{n, k}^{\star}\}$ satisfies the constraint on the total bandwidth of feedback channels. According to (\ref{eq:Bn_1_asy}) and the user clustering criteria in (\ref{eq:cluster_criteria1}) and (\ref{eq:cluster_criteria2}), it is easy to see when $P$ is large enough, our bits allocation results in $B_{1, 1}^{\star} \geq B_{2, 1}^{\star} \geq \cdots, \geq B_{N, 1}^{\star} $.
\end{remark}

In the following, we will show for our joint power and feedback bits allocation method, the proposed user clustering (denoted as $\mathfrak{C}$) in general is the optimal in terms of ESR among all possible user clustering methods. First note that, it is obvious that, if a different user clustering (denoted as $\tilde{\mathfrak{C}}$) is obtained from $\mathfrak{C}$ by exchanging users $(i, p )$ and $(j, q)$ for $i \neq j$ and $p \neq q$ where $(i, p )$ and $(j, q)$ follow (\ref{eq:cluster_criteria1}) and (\ref{eq:cluster_criteria2}), the ESR of $\tilde{\mathfrak{C}}$ will be reduced compared to that of $\mathfrak{C}$. Therefore, we can only focus on the user clustering obtained from $\mathfrak{C}$ by fixing the strongest user in each cluster and exchanging a few pairs of users with indices $(i, k)$ and $(j, k)$ for $i \neq j$ and $k \geq 2$.
%
Then, we first consider $\tilde{\mathfrak{C}}$  obtained from $\mathfrak{C}$ by exchanging any two users $(n,k)$ and $(m, k)$ with $m \leq n$ (i.e., user $(n,k)$ and user $(m,k)$ in $\mathfrak{C}$ become user $(m,k)$ and user $(n,k)$ respectively in $\tilde{\mathfrak{C}}$).
According to Theorem \ref{theorem:B_without_integer} and \ref{theorem:power_alloc}, it is easy to see the power allocated to user $(n,k)$ and $(m, k)$ originally in $\mathfrak{C}$ are exchanged and the feedback bits allocated to these two users are varied. However, the power and bits allocated to the other users do not change. Thus, the ergodic rate of any user other than the above two in $\tilde{\mathfrak{C}}$ is the same as that of the corresponding user in $\mathfrak{C}$.

We denote the ergodic rates of users $(m,k)$ and $(n,k)$ in $\tilde{\mathfrak{C}}$  as $\tilde{R}_{m,k}$ and $\tilde{R}_{n,k}$ respectively, which are obtained by substituting the power and feedback bits allocation results (without integer constraint) given by Theorem \ref{theorem:B_without_integer} and \ref{theorem:power_alloc} into the expression of SINR in (\ref{eq:gamma_nk_gt1}).
Compared with $\mathfrak{C}$, since in $\tilde{\mathfrak{C}}$ the stronger user is allocated less power and the weaker user is allocated with more power (i.e., $P_n < P_m$), it is reasonable for one to anticipate the following result $R_{m, k} + R_{n, k} > \tilde{R}_{m, k} + \tilde{R}_{n, k} $ holds in general. In addition, the larger the gap of power allocation $P_m - P_n \geq  0$ is or equivalently the larger the distance between the two indices $n -m $ is, the larger the gap of ESR $(R_{m, k} + R_{n, k})  - (\tilde{R}_{m, k} + \tilde{R}_{n, k} )$ becomes. Moreover, if $\tilde{\mathfrak{C}}$ is obtained from $\mathfrak{C}$ by exchanging more such user pairs, the rate gap between two user clusterings $\mathfrak{C}$ and $\tilde{\mathfrak{C}}$ will be increased.
We find after some tedious manipulations that, the \emph{exact} closed-form results of $R_{i, k}$ and $\tilde{R}_{i, k}$ for $i =m ,n$ are necessary for the comparison between $\mathfrak{C}$ and $\tilde{\mathfrak{C}}$ \emph{in theory}.
But unfortunately, as we have noted in Section \ref{sec:actual_performance} that it is very difficult if not impossible to obtain the {exact} closed-form results. Thus, the ESR comparison between two user clusterings in theory is intractable. We have to turn to numerical experiments.

For verification, we compare the ESRs of different user clustering methods with the system parameters given by $M = 6$, $B = 72$ bits, $N = 3$ and $K = 2$.
The AWGN at all users are with the equal power, i.e., $\sigma^2_{n,k} = -50$ dBm $\forall n, k$.
The distance parameters are given by $\mathbf{D}_1 = \Big[ \begin{smallmatrix}  25  & 35 \\ 27 & 37 \\  29 & 39 \\ \end{smallmatrix} \Big]$ with the $(n,k)$-th element denoting $d_{n,k}$ in meters. Fig. \ref{fig:plot_exchange} shows the ESR gaps between our user clustering the other methods, where $R_{\text{sum}} \left[\left( p, q, r\right) \right]$ with $p, q, r \in \{1,2,3\} $ denotes the ESR of $\tilde{\mathfrak{C}}$, and the users $(1,2), (2, 2), (3, 2)$ in $\tilde{\mathfrak{C}}$ corresponds to the users $(p, 2), (q, 2), (r, 2)$ originally in $\mathfrak{C}$ respectively.
Specifically, $R_{\text{sum}} \left[\left( 1, 2, 3 \right) \right]$ denotes ESR of our method. As anticipated, it is easy to see from Fig. \ref{fig:plot_exchange} that the performance of our method is the optimal. Moreover, we can observe the following order of the ESR that $\{ R_{\text{sum}} \left[\left( 1, 2,3 \right) \right] \} > \{ R_{\text{sum}} \left[\left( 2, 1, 3 \right) \right], R_{\text{sum}} \left[\left( 1, 3, 2 \right) \right] \}  > \{R_{\text{sum}} \left[\left( 3, 1, 2 \right) \right], R_{\text{sum}} \left[\left( 2, 3, 1\right) \right]  \}> \{R_{\text{sum}} \left[\left( 3, 2, 1 \right) \right] \}$, which complies with the intuitive analysis above. However, the relative amounts of ESRs within a ESR set depend on the actual system parameters in general, and thus are \emph{not comparable}.
In the following section, we will validate the obtained analytical results and our proposed scheme by more numerical results.

%


\begin{figure}[t]
\centering
\includegraphics[width=0.95\columnwidth]{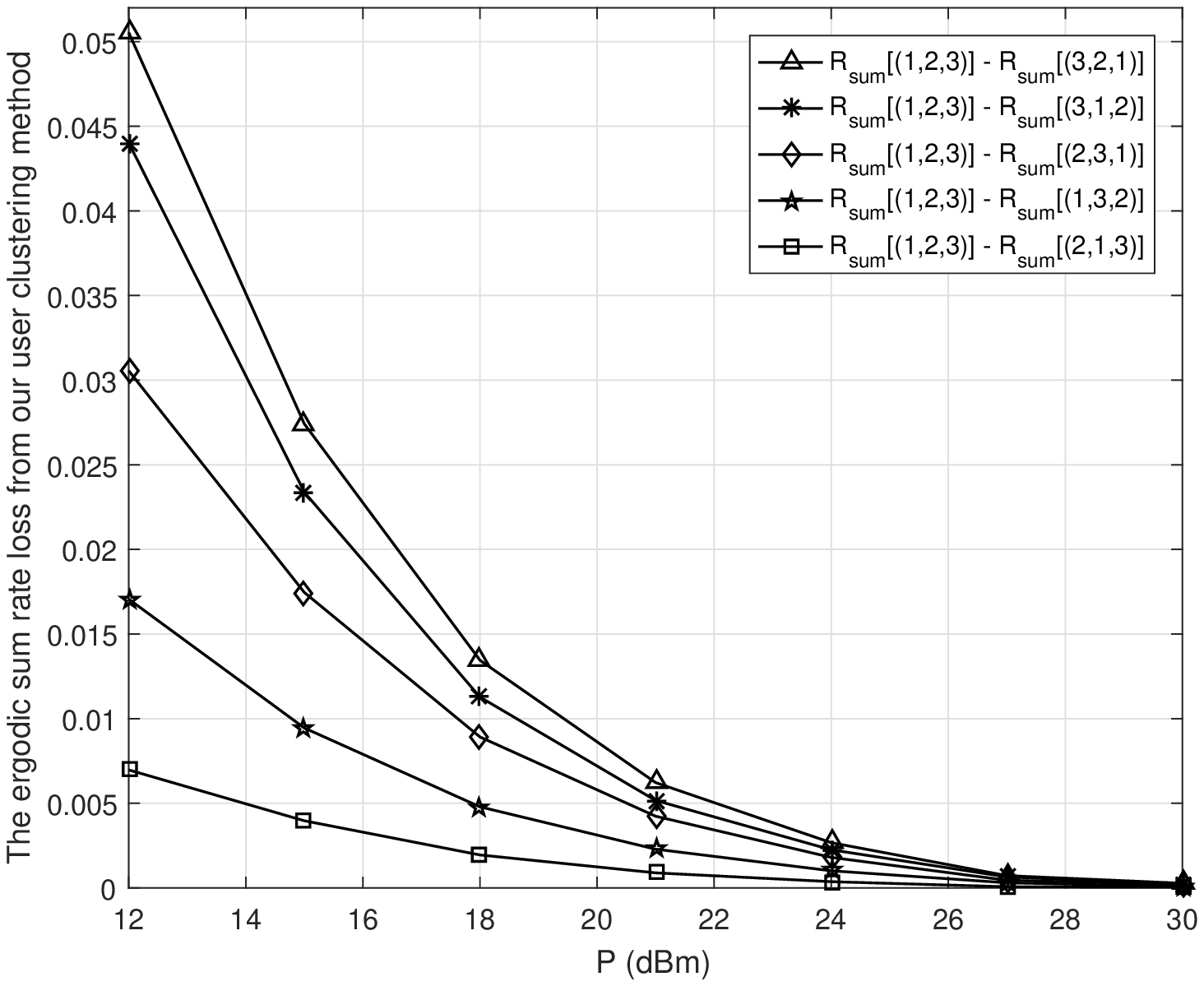}
\caption{\label{fig:plot_exchange}{\small{}The ESR losses of some of the other user clustering methods compared with our user clustering method.}}
\vspace{-9mm}
\end{figure}

\begin{figure}[t]
\centering
\includegraphics[width=0.95\columnwidth]{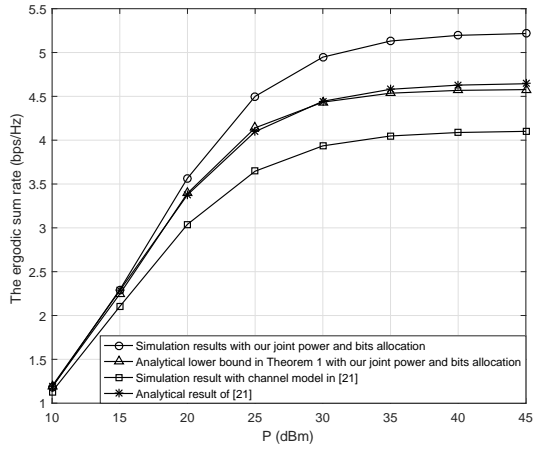}
\caption{\label{fig:plot_compare_joint_optimization}{\small{}Comparisons of the simulation and analytical results of the ESR v.s. $P$ for our proposed scheme and those in \cite{X.Chen_2017}.}}
\vspace{-6mm}
\end{figure}


\section{Numerical Results}\label{sec:numerical_results}

For the numerical results, we consider the systems with two different sets of distance parameters given as $\mathbf{D}_1 $ of Fig. \ref{fig:plot_exchange} and $\mathbf{D}_2 = \Big[ \begin{smallmatrix}  10  & 35 \\ 12& 37 \\  14 & 39 \\ \end{smallmatrix} \Big]$ and the different number of $B$s. The other major system parameters are the same as those of Fig. \ref{fig:plot_exchange}.
Specifically, the systems associated with Fig. \ref{fig:plot_compare_joint_optimization} - Fig. \ref{fig:plot_compare_bit_allocate_my_with_cxm_bit_num} and the systems associated with Fig. \ref{fig:plot_sum_rate_vs_P} - Fig. \ref{fig:plot_sum_rate_vs_B} are with $\mathbf{D}_1$ and $\mathbf{D}_2 = \Big[ \begin{smallmatrix}  10  & 35 \\ 12& 37 \\  14 & 39 \\ \end{smallmatrix} \Big]$ respectively. The path-loss exponent $\alpha = 4$.
Moreover, if the total transmit power $P$ can not support a given number of clusters (i.e., $N=3$ at the beginning of system design), we will follow the way of (a) in the design as we have described after Remark \ref{remark:power_alloc}. The ESR curves corresponding to the obtained analytical lower bound in Theorem \ref{theorem:R_nk_LB} will be notified specifically. Otherwise, the ESR curves are obtained from simulations.

Fig. \ref{fig:plot_compare_joint_optimization} shows the simulation results of the ESR v.s $P$ for our proposed joint bits and power allocation method and the corresponding analytical lower bound given by \emph{Theorem} \ref{theorem:R_nk_LB} with $B = 42$ bits. For comparisons, we also plot the simulation and analytical results with the quantized CSI model in \cite{X.Chen_2017}.
We can see from the figure, the simulation results illustrate the fact we have explained in \emph{Remark} \ref{remark:2}, i.e., the channel model for limited feedback used in \cite{X.Chen_2017} actually underestimates the performance of the system with the well recognized practical channel model used in many previous works \cite{Jindal06,Liu2016,Q.Yang_2017}. However, we can see in the figure that the analytical ergodic rate result derived in \cite{X.Chen_2017} is right over the corresponding results of the simulations. This illustrates the analysis method in \cite{X.Chen_2017} to ignore the statistical dependence between the useful signal and MU interference signals received at each user results in \emph{overestimation} of the actual performance.
The big gap between analytical and simulated results demonstrates the impact of the aforementioned dependent relationships can not be ignored in the analysis and design of the practical systems.
Moreover, as transmit power varies, our obtained analytical lower bound can track the real values well.

Fig. \ref{fig:plot_compare_bit_allocate_my_with_cxm} compares the ESR as a function of $P$ achieved by our proposed bits allocation method with those of the equal bits allocation and the bits allocation method proposed in \cite{X.Chen_2017} for the systems with different $B$s. All schemes employ the quantized channel model in this paper and the equal power allocation among all users. We can see our bits allocation method performs the best of all. And the performance gaps between our method and the other two both increase when $B$ becomes large, which illustrates the advantage of our method.

%


\begin{figure}[t]
\centering
\includegraphics[width=0.95\columnwidth]{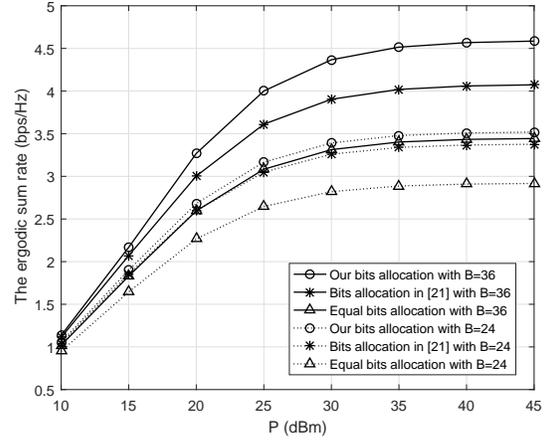}
\caption{\label{fig:plot_compare_bit_allocate_my_with_cxm}{\small{}Comparisons of the ESR v.s $P$ for different feedback bits allocation methods.}}
\vspace{-9mm}
\end{figure}

\begin{figure}[t]
\centering
\includegraphics[width=0.95\columnwidth]{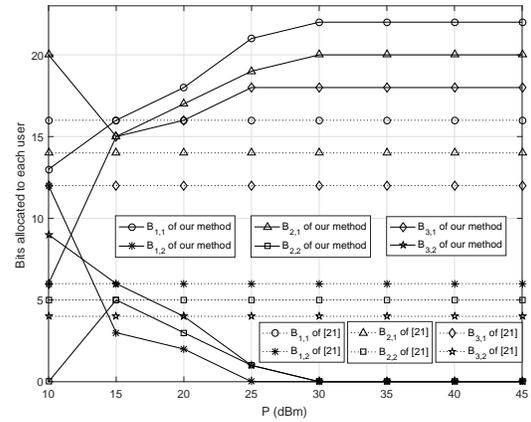}
\caption{\label{fig:plot_compare_bit_allocate_my_with_cxm_bit_num}{\small{}Comparisons of the number of feedback bits allocated to each user for different bits allocation methods.}}
\vspace{-6mm}
\end{figure}


Fig. \ref{fig:plot_compare_bit_allocate_my_with_cxm_bit_num} shows the number of feedback bits allocated to each user obtained from our method and that obtained from the method in \cite{X.Chen_2017} with the total feedback channel bandwidth of $B = 60$ bits.
We can see that, as $P$ increases our method allocates more bits to the nearest users (to the BS) in each cluster, and allocates less bits to the farther users. In very high power region, our method does not allocate feedback bits for the farther user in each cluster. It is not difficult to check the results show in this figure are consistent with our analysis in Subsection \ref{subsec:asymptotic results}.
The results also can be explained by the qualitative analysis as follows. On the one hand, increasing the quantization accuracy can enhance the strength of each user's transmit signal, and the increasing of the near users's signal strength is more effective in enhancing the transmission rate than increasing the farther users' signal strength. On the other hand, the system is interference-limited in high power region. Increasing the quantization accuracy of user $(n,k)$'s CSI can reduce the inter-cluster interference to all users $(m, k)$ $\forall m\neq n$ and $\forall k$, but only reduces the intra-cluster interference to users $(n , j) $ for $ j > k $ due to SIC. Thus, with equal power allocation within each cluster, increasing the quantization accuracy of user $(n , k) $'s CSI is more effective in reducing the MU interference than increasing the quantization accuracy of user $(n , j) $'s CSI for $ j > k $.
In contrast, the bits allocation by the method of \cite{X.Chen_2017} does not change with the increase of $P$.

Fig. \ref{fig:plot_sum_rate_vs_P} shows the ESR as a function of $P$ for different total bandwidths of feedback channels achieved respectively by our proposed joint optimization method, our bits allocation method with equal power allocation among all users, and the equal power and bits allocation method. As can be seen from the figure, our proposed power allocation converges to the equal power allocation as $P$ goes large, which is consistent with our analysis in \emph{Corollary} \ref{corollary:asy_power}. However, there is always a gap between the performance of equal power and bits allocation and those of the other two.

Fig. \ref{fig:plot_sum_rate_vs_B} compares the ESR as a function of $B$ achieved by our joint optimization method, our bits allocation method with equal power allocation and equal power and bits allocation method. As the baseline, we also plot the performance of the OMA transmission, where, for fair comparisons, the groups of the nearer users and the farther users share a fraction of $1/2$ time resource using linear ZFBF with equal power allocation for transmission. As can be seen from the figure, the performance of our joint bits and power allocation method is the best of all. The performances of all NOMA transmission schemes are better than that of OMA scheme for the system setting considered.
Moreover, we can observe that, the performance gap between our joint optimization scheme and the equal bits allocation scheme and also the performance gap between our joint optimization scheme and the OMA scheme both increase as the accuracy of channel quantization ($B$) increases.
However, the advantage of our power allocation method over the equal power allocation method decreases as the accuracy of channel quantization increases.


\begin{figure}[t]
\centering
\includegraphics[width=0.95\columnwidth]{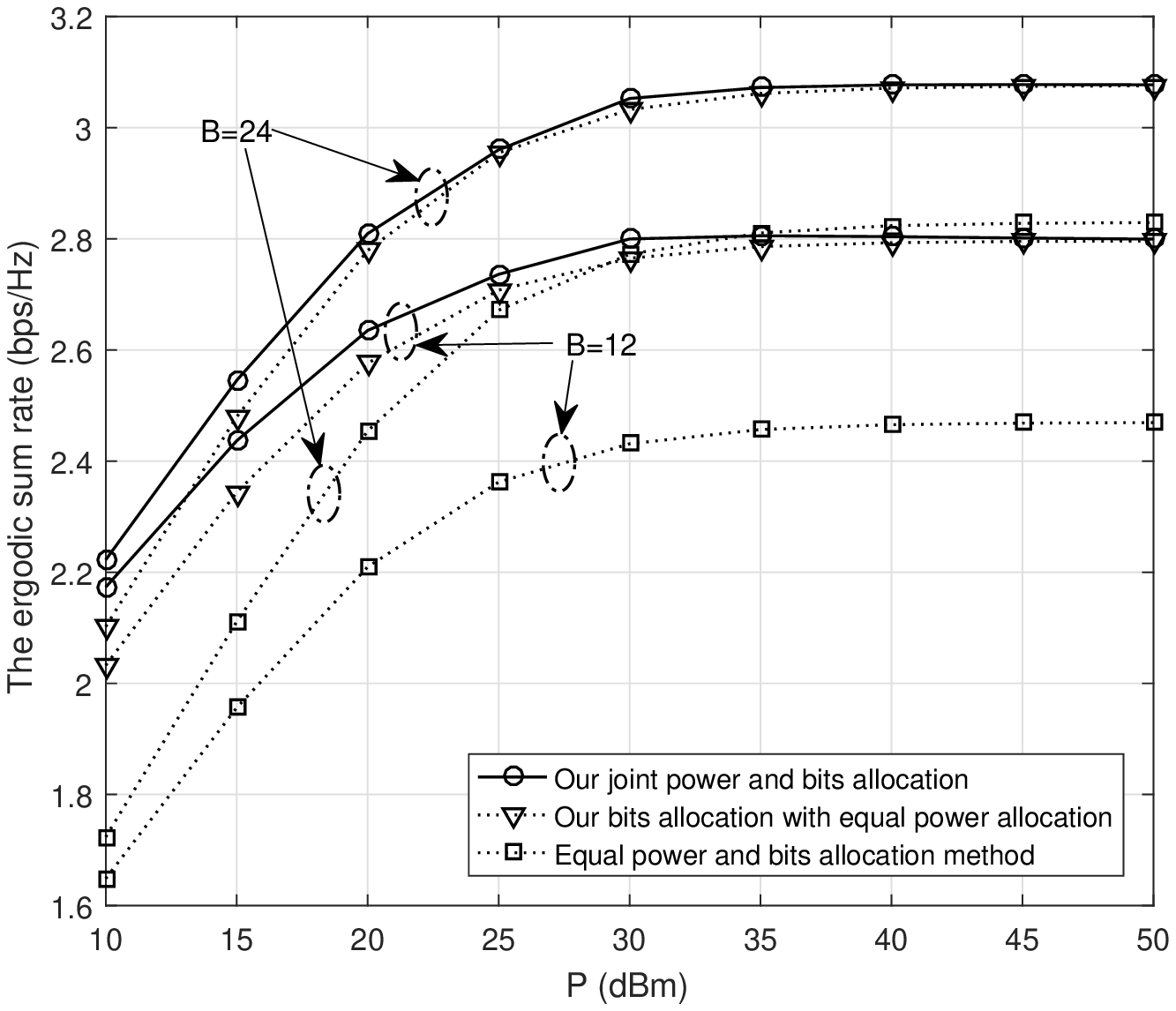}
\caption{\label{fig:plot_sum_rate_vs_P}{\small{}Comparison of the system ESRs as a function of $P$ with different power allocation methods.}}
\vspace{-9mm}
\end{figure}

\begin{figure}[t]
\centering
\includegraphics[width=0.95\columnwidth]{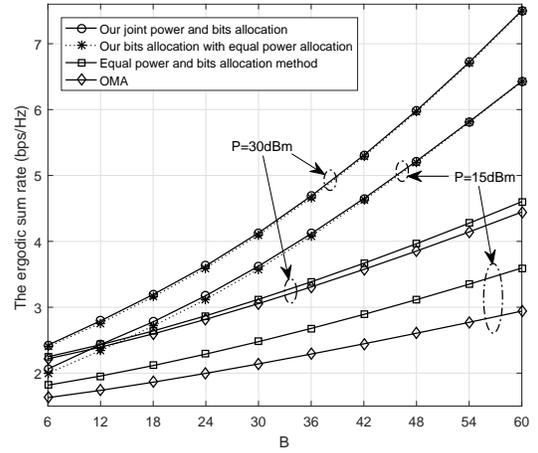}
\caption{\label{fig:plot_sum_rate_vs_B}{\small{}Comparisons of the ESRs as a function of $B$ achieved by different power and feedback bits allocation methods.}}
\vspace{-6mm}
\end{figure}


\section{Conclusions}

In this paper, we have investigated the transmit-receive design, performance analysis, and optimization of downlink multi-antenna NOMA cellular networks with general limited feedback in FDD mode.
A mathematically strict performance analysis of the ergodic rate for the considered system was conducted for the first time. A closed-form lower bound on the ergodic rate of each user was obtained in terms of transmit power, CSI quantization accuracy and channel conditions. Then, we optimized the two key parameters of the considered system, i.e., transmit power and the number of feedback bits allocated to each users. Finally, we conducted asymptotic performance analysis on the optimization results, and obtained insights on system design guidelines.

\appendix
\subsection{Proof of Theorem \ref{theorem:R_nk_LB}}\label{proof:theorem_R_nk_LB}

According to \cite[\emph{Lemma} 2]{Jindal06}, we have $\left| \tilde{\mathbf{e}}_{n,k} \mathbf{w}_i \right|^2 \sim \text{Beta} (1, M-2)$ and $\Bbb{E} \left[ | \tilde{\mathbf{e}}_{n,k} \mathbf{w}_i |^2 \right] = (M-1)^{-1}$.  Then, by employing Jensen's inequality we can lower-bound $R_{n,k}$ for $k \geq 2 $ as (\ref{eq:Rnk_1}) at the top of next page,
\begin{figure*}
\begin{eqnarray}\label{eq:R_nk_LB_temp1}
R_{n,k} &\geq & \Bbb{E} \left[  \log_2 \left(  1+   \frac{ \frac{d_{n,k}^{-\alpha}}{\sigma_{n,k}^2} |\mathbf{h}_{n,k} \mathbf{w}_{n}|^2 P_{n,k}} {1 + S_{n,k}^{(2)} \left| \mathbf{h}_{n,k} \mathbf{w}_n \right|^2 + S_{n,k}^{(3)} \Bbb{E} \left[  \left| \tilde{\mathbf{e}}_{n,k} \mathbf{w}_i \right|^2 \right] ||\mathbf{h}_{n,k}||^2 \sin^2\theta_{n,k}}  \right)   \right]  \\
\label{eq:Rnk_lower1}
&=& \Bbb{E} \left[ \log_2 \left( \frac{1 + S_{n,k}^{(1)} \left| \mathbf{h}_{n,k} \mathbf{w}_n \right|^2 + S_{n,k}^{(3)}(M-1)^{-1} ||\mathbf{h}_{n,k}||^2 \sin^2\theta_{n,k}}
{1 + S_{n,k}^{(2)} \left| \mathbf{h}_{n,k} \mathbf{w}_n \right|^2 + S_{n,k}^{(3)}(M-1)^{-1} ||\mathbf{h}_{n,k}||^2 \sin^2\theta_{n,k} } \right) \right] \\
&=& \Bbb{E} \left[ \log_2 \left( 1 + S_{n,k}^{(1)} Z ||\mathbf{h}_{n,k}||^2 + S_{n,k}^{(3)}(M-1)^{-1}||\mathbf{h}_{n,k}||^2 \sin^2\theta_{n,k} \right) \right] \nonumber\\
\label{eq:Rnk_1}
&&- \Bbb{E} \left[ \log_2 \left( 1 + S_{n,k}^{(2)} Z ||\mathbf{h}_{n,k}||^2 + S_{n,k}^{(3)}(M-1)^{-1}||\mathbf{h}_{n,k}||^2 \sin^2\theta_{n,k} \right) \right] \triangleq R_{n,k}^{LB1}, ~~
\end{eqnarray}
\vspace{-4mm}
\hrulefill
\end{figure*}
where $Z \triangleq | \tilde{\mathbf{h}}_{n,k} \mathbf{w}_n |^2 \sim \text{Beta} (1, M-1)$ \cite{Jindal06}. It has been shown in \cite[\emph{Lemma} 2]{Yoo_limited} that, with cell approximation of vector quantization the joint distribution of the random variable (RV) pair $(||\mathbf{h}_{n,k}||^2 \sin^2\theta_{n,k}, ||\mathbf{h}_{n,k}||^2 \cos^2\theta_{n,k})$ is the same as that of $(I, S)$, where $I = \delta_{n,k} Y$ and $S = X + (1 - \delta_{n,k})Y$ with $X \sim \text{Gamma}(1,1)$ and $Y \sim \text{Gamma}(M-1, 1)$ being two independent RVs.
Since $X$, $Y$ and $Z$ are independent of each other\cite{Jindal06}, $R_{n,k }^{LB1}$ in (\ref{eq:Rnk_1}) can be obtained as
\begin{eqnarray}
\hspace{-8mm}&&R_{n,k }^{LB1} = \Bbb{E} \Big\{ \log_2 \Big[ 1 + S_{n,k}^{(1)} Z ( X + (1 - \delta_{n,k} Y) ) \nonumber\\
\hspace{-8mm}&&\hspace{30mm}+ \Big( S_{n,k}^{(1)} Z + S_{n,k}^{(3)}(M-1)^{-1} \Big) \delta_{n,k} Y \Big] \Big\} \nonumber\\
\hspace{-8mm}&&\hspace{13mm}- \Bbb{E} \Big\{ \log_2 \Big[ 1 + S_{n,k}^{(2)} Z ( X + (1 - \delta_{n,k} Y) )  \nonumber\\
\hspace{-8mm}&&\hspace{30mm}+ \Big( S_{n,k}^{(1)} Z + S_{n,k}^{(3)}(M-1)^{-1} \Big) \delta_{n,k} Y \Big] \Big\} \nonumber\\
\hspace{-8mm}&&= \Bbb{E}_{Z} \left[ \Upsilon_{n,k}\left(S_{n,k}^{(1)} Z,~ S_{n,k}^{(1)} Z + S_{n,k}^{(3)}(M-1)^{-1} \delta_{n,k} \right) \right] \nonumber \\
\label{eq:Rnk_LB1_k}
\hspace{-8mm}&&\hspace{2mm}- \Bbb{E}_{Z} \left[ \Upsilon_{n,k}\left(S_{n,k}^{(2)} Z,~ S_{n,k}^{(2)} Z + S_{n,k}^{(3)}(M-1)^{-1} \delta_{n,k} \right) \right] \hspace{-1mm},
\end{eqnarray}
where $ \Upsilon_{n,k}(\mu,\nu) \triangleq \Bbb{E}_{X, Y} \left[ \log_2 \left( 1 + \mu X  + \nu Y \right) \right]$.
The result of $\Bbb{E}_{Z} \left[ \Upsilon_{n,k} ( aZ, aZ+b ) \right] \triangleq \Theta_{n,k} (a,b)$ is derived in \emph{Appendix } \ref{proof:res_E_Upsilon}, where $\Theta_{n,k} (a,b)$ is given in (\ref{eq:res_Theta_nk}). Then, the result of $R_{n,k }^{LB1}$ in (\ref{eq:R_nk_LB1_k}) follows.

When $k = 1$, $S_{n,k}^{(2)} = 0$. By replacing $S_{n,k}^{(2)}$ in (\ref{eq:Rnk_LB1_k}) with 0, we can similarly obtain
\begin{eqnarray}
\label{eq:Rnk_LB1_1}
R_{n,1}^{LB1}  &=& \Bbb{E} \Big\{ \log_2 \Big[ 1 + \alpha_{n,1} P_{n,1} Z X  \nonumber\\
&&\hspace{3mm}+ \left( \alpha_{n,1} P_{n,1} Z + S_{n,1}^{(3)}(M-1)^{-1} \delta_{n,1} \right) Y \Big] \Big\} \nonumber\\
&&- \Bbb{E} \left\{ \log_2 \left[ 1 + S_{n,1}^{(3)}(M-1)^{-1} \delta_{n,1} Y \right] \right\} \nonumber\\
&=& \Theta_{n,k}(\alpha_{n,1} P_{n,1}, S_{n,1}^{(3)}(M-1)^{-1} \delta_{n,1}) \nonumber\\
&&- \Bbb{E}  \left[ \log_2 \left( 1 + S_{n,1}^{(3)}(M-1)^{-1} \delta_{n,1} Y \right) \right]\label{eq:Rn1_mid}.~~~
\end{eqnarray}
The second term in (\ref{eq:Rn1_mid}) can be obtained as
\begin{eqnarray}
\label{eq:Rnk_LB1_mid}
\hspace{-8mm}&&\Bbb{E} \left[ \log_2 \left( 1 + S_{n,1}^{(3)}(M-1)^{-1} \delta_{n,1} Y \right) \right] \nonumber\\
\hspace{-8mm}&&=  \int_0^{+\infty} \log_2 \left( 1 + S_{n,1}^{(3)}(M-1)^{-1} \delta_{n,1} y \right) \frac{e^{-y} y^{M-2}}{(M-2)!} ~dy \nonumber\\
\hspace{-8mm}&&= \log_2(e) \exp \left( \frac{M-1}{S_{n,1}^{(3)} \delta_{n,1}} \right) \sum_{q=1}^{M-1} E_q \left( \frac{M-1}{S_{n,1}^{(3)} \delta_{n,1}} \right),
\end{eqnarray}
where (\ref{eq:Rnk_LB1_mid}) follows by using (\ref{eq:int_lnex}) in \emph{Appendix} \ref{proof:res_E_Upsilon}.
Then, $R_{n,k }^{LB1}$ in (\ref{eq:R_nk_LB1_1}) for $k =1$ follows by substituting (\ref{eq:Rnk_LB1_mid}) into (\ref{eq:Rnk_LB1_1}).

\subsection{The derivation of $\Bbb{E}_{Z}\left[ \Upsilon_{n,k} ( aZ, aZ+b ) \right]  \triangleq \Theta_{n,k} (a,b)  $ for $a>0$, $b>0$}\label{proof:res_E_Upsilon}
Using the result in \cite{Amari97}, the probability density function (PDF) of $J_{\mu, \nu} = \mu X  + \nu Y$ for $\mu \neq \nu$ is given by $f_{J_{\mu, \nu}}(x) = \frac{(-1)^{M-1} \mu^{M-2}}{(\nu - \mu)^{M-1}} e^{-\frac{x}{\mu}} + \sum_{p=1}^{M-1} \frac{(-\mu)^{p-1}}{(M-1-p)! (\nu-\mu)^p \nu^{M-1-p}} x^{M-1-p} e^{-\frac{x}{\nu}}~{(x > 0)}$. Thus,
\begin{eqnarray}
&&\Upsilon_{n,k}(\mu,\nu) =  \Bbb{E}_{X,Y} \left[ \log_2 \left( 1 + \mu X  + \nu Y \right) \right]\nonumber\\ &&= \int_0^{+\infty} \log_2(1 + x) f_{J_{\mu, \nu}}(x)~dx \nonumber\\
\label{eq:Upsilon_nk}
&&= \log_2(e) \Bigg[ \frac{(-\mu)^{M-1}}{(\nu - \mu)^{M-1}} \exp\left(\frac{1}{\mu}\right) E_1\left(\frac{1}{\mu}\right) \nonumber\\
&&\hspace{4mm}+ \sum_{p=1}^{M-1} \frac{(-\mu)^{p-1} \nu}{(\nu - \mu)^p} \exp \left( \frac{1}{\nu} \right) \sum_{q=1}^{M-p} E_q \left( \frac{1}{\nu} \right) \Bigg], ~~~~~~
\end{eqnarray}
where (\ref{eq:Upsilon_nk}) follows from the result of \cite[(78)]{Alouini99}
\begin{eqnarray}\label{eq:int_lnex}
\int_0^{+\infty} \ln(1+ax) e^{-\mu x} x^{p-1}dx \nonumber\\= (p-1)! e^{\frac{\mu}{a}} \mu^{-p} \sum_{q=1}^{p} E_q\left(\frac{\mu}{a}\right).
\end{eqnarray}
Then, substituting $\mu = aZ$ and $\nu = aZ + b$ with $a,b >0$ in to (\ref{eq:Upsilon_nk}), we can obtain
\begin{eqnarray}\label{eq:E_Upsilon_1}
\hspace{-9mm}&&\Bbb{E}_{Z} \left[ \Upsilon_{n,k} ( aZ, aZ+b ) \right] \nonumber\\
\hspace{-9mm}&&=  \log_2(e) \Bbb{E}_{Z} \Bigg[
\frac{(-aZ)^{M-1}}{b^{M-1}} \exp \bigg( \frac{1}{aZ} \bigg) E_1 \bigg( \frac{1}{aZ} \bigg) \nonumber\\
\hspace{-9mm}&&\hspace{22mm} + \sum_{p=1}^{M-1} \frac{(-aZ)^{p-1} (aZ+b)}{b^p} \exp \bigg( \frac{1}{aZ+b} \bigg) \nonumber\\
\hspace{-9mm}&&\hspace{22mm} \times \sum_{q=1}^{M-p} E_q \bigg( \frac{1}{aZ+b} \bigg)\Bigg].
\end{eqnarray}
By using the relationship\cite{Gradshteyn2007} $E_{q}(x) = \frac{1}{q-1} \left[ e^{-x} - x E_{q-1}(x) \right]$ and $E_1(x) = -E_i(-x)$ with induction, $E_{q}(x)$ can be expressed with $E_i(-x)$ as $E_{q}(x) =  e^{-x} \sum_{s=0}^{q-2} \frac{(q-s-2)!(-x)^s}{(q-1)!} + \frac{(-1)^q (x)^{q-1}}{(q-1)!}E_{i}(-x)$. Then, substituting this result into (\ref{eq:E_Upsilon_1}) with the PDF of $Z$, we can obtain
\vspace{-1mm}
\begin{eqnarray}
\label{eq:E_Upsilon_2}
\hspace{-0.5cm}&&\Bbb{E}_{Z} \left[ \Upsilon_{n,k} ( aZ, aZ+b ) \right] \nonumber\\
\hspace{-0.5cm}&=& \log_2(e) \Bbb{E}_{Z} \Bigg[
\frac{(-1)^M(aZ)^{M-1}}{b^{M-1}} \exp \left( \frac{1}{aZ} \right) E_i \left( -\frac{1}{aZ} \right) \nonumber\\
\hspace{-0.5cm}&& + \sum_{p=1}^{M-1} \sum_{q=1}^{M-p} \frac{ (-1)^{p+q+1} (aZ)^{p-1} }{ (q-1)! b^p (aZ+b)^{q-2} } \nonumber\\
\hspace{-0.5cm}&& \hspace{2cm} \times \exp \left( \frac{1}{aZ+b} \right)  E_i \left( -\frac{1}{aZ+b} \right) \nonumber\\
\hspace{-0.5cm}\hspace{-0.8cm}&& + \sum_{p=1}^{M-1} \sum_{q=2}^{M-p} \sum_{s=0}^{q-2} \frac{ (-1)^{p+s-1} (q-s-2)! (aZ)^{p-1} }{ (q-1)! b^{p} (aZ+b)^{s-1} }\Bigg],
\end{eqnarray}
which is given by (\ref{eq:res_Theta_nk}) with
\begin{eqnarray}
\hspace{-7mm}&&I_1(a) \triangleq \int_0^1 (az)^{M-1} (1-z)^{M-2} \exp \bigg( \frac{1}{az} \bigg) E_i \bigg( \hspace{-1.5mm}-\frac{1}{az} \bigg) dz,\nonumber\\
\hspace{-7mm}&&\label{I1_definition} \\
\label{I2_definition}
\hspace{-7mm}&&I_2(a,b,p,q) \triangleq \int_0^1 \frac{ (az)^{p-1} (1-z)^{M-2} }{ (az+b)^{q-2} } \exp \left( \frac{1}{az+b} \right) \nonumber\\
\hspace{-7mm}&&\hspace{40mm}  \times E_i \left( -\frac{1}{az+b} \right) dz, \\
\label{I3_definition}
\hspace{-7mm}&&I_3(a,b,p,s) \triangleq \int_0^1 \frac{ (az)^{p-1} (1-z)^{M-2} }{ (az+b)^{s-1} } dz.
\end{eqnarray}
Then, we derive the results of (\ref{I1_definition}), (\ref{I2_definition}) and (\ref{I3_definition}) as follows. For the convenience of presentation, we first define an integral $\Psi(n, u, v) = \int_u^v x^n e^x E_i(-x) dx$, whose closed-form result is given by (\ref {eq:res_Psi_1}) for different cases. The derivation is provided in \emph{Appendix} \ref{appendix:Psi}.
Then, $I_1(a)$ can be obtained as
\begin{eqnarray}
\hspace{-0.7cm}&&I_1(a) = \sum_{t=0}^{M-2} {M-2 \choose t} \frac{(-1)^t}{a^t} \int_0^1 (az)^{M-1+t} \nonumber\\
\hspace{-0.7cm}&&\hspace{4cm} \times \exp \left( \frac{1}{az} \right) E_i \left( -\frac{1}{az} \right) dz \nonumber\\
\hspace{-0.7cm}&&= \sum_{t=0}^{M-2} {M-2 \choose t} \frac{(-1)^t}{a^t} \int_{+\infty}^{\frac{1}{a}} x^{-M+1-t} e^x E_i(-x) ~d\left( \frac{1}{ax} \right) \nonumber\\
\hspace{-0.7cm}&&\quad\label{eq:res_I1}\\
\hspace{-0.7cm}&&= \sum_{t=0}^{M-2} {M-2 \choose t} \frac{(-1)^t}{a^{t+1}} \Psi\left(-M-1-t, \frac{1}{a}, v\right)\bigg|_{v \to +\infty}\nonumber,
\end{eqnarray}
where (\ref{eq:res_I1}) follows by the change of variables $x = \frac{1}{az}$.
Similarly, $I_2(a,b,p,q)$ can be obtained by using the change of variables $x = \frac{1}{az + b}$ as
\begin{eqnarray}
\hspace{-7mm}&& I_2(a,b,p,q)  =\int_{\frac{1}{b}}^{\frac{1}{a+b}} \left( \frac{1}{x} - b \right)^{p-1} \left[ 1 - \frac{1}{a} \left( \frac{1}{x} - b \right) \right]^{M-2} \nonumber\\
\hspace{-7mm}&&\hspace{30mm} \times x^{q-2} e^x E_i(-x) ~d \left[ \frac{1}{a} \left( \frac{1}{x} - b \right) \right] \nonumber\\
\hspace{-7mm}&&= \sum_{r=0}^{M-2} \sum_{t=0}^{p-1+r} {M-2 \choose r} {p-1+r \choose t} \frac{(-1)^{p-1-t} b^{p-1+r-t}}{a^{r+1}} \nonumber\\
\hspace{-7mm}&&\hspace{40mm} \times\int_{\frac{1}{a+b}}^{\frac{1}{b}} x^{q-4-t} e^x E_i(-x) ~dx \nonumber\\
\hspace{-7mm}&&= \sum_{r=0}^{M-2} \sum_{t=0}^{p-1+r} {M-2 \choose r} {p-1+r \choose t} \frac{(-1)^{p-1-t} b^{p-1+r-t}}{a^{r+1}} \nonumber\\
\hspace{-7mm}&&\hspace{40mm} \times\Psi\left(q-4-t, \frac{1}{a+b}, \frac{1}{b}\right).\nonumber
\end{eqnarray}
Finally, $I_3(a,b,p,s)$ can be obtained as
\begin{eqnarray}
\hspace{-1.3cm}&&I_3(a,b,p,s)  \nonumber\\
\hspace{-1.3cm}&&= \sum_{t=0}^{M-2} {M-2 \choose t} (-a)^{-t} \int_0^1 \frac{ (az)^{p-1+t} }{ (az+b)^{s-1} } dz \nonumber\\
\hspace{-1.3cm}&&= \sum_{t=0}^{M-2} {M-2 \choose t} (-1)^{t} a^{-t-1} \int_0^a \frac{ x^{p-1+t} }{ (x+b)^{s-1} } dx \nonumber\\
\label{eq:res_I3_1}
\hspace{-1.3cm}&&= \sum_{t=0}^{M-2} {M-2 \choose t} \frac{ (-1)^{t} a^{-t-1} }{ b^{s-1} } \int_0^a \frac{ x^{p-1+t} }{ (\frac{x}{b}+1)^{s-1} } dx\\
\label{eq:res_I3_2}
\hspace{-1.3cm}&&= \sum_{t=0}^{M-2} {M-2 \choose t} \frac{ (-1)^{t} a^{-t-1} }{ b^{s-1} } \cdot \frac{ a^{p+t} }{ (p+t) } \nonumber\\
\hspace{-1.3cm}&&\hspace{2cm} \times ~{_2}F_1 \left( s-1, p+t; p+t+1; -\frac{a}{b} \right)~~~~ \\
\hspace{-1.3cm}&&= \sum_{t=0}^{M-2} {M-2 \choose t} \frac{ (-1)^{t} a^{p-1} }{ b^{s-1} (p+t) } \nonumber\\
\hspace{-1.3cm}&&\hspace{2cm} \times ~{_2}F_1 \left( s-1, p+t; p+t+1; -\frac{a}{b} \right) , \nonumber
\end{eqnarray}
where (\ref{eq:res_I3_1}) is obtained from the change of variables $x = az$ and (\ref{eq:res_I3_2}) follows from \cite[3.194.1]{Gradshteyn2007} $\int_0^u \frac{x^{\mu-1} dx}{(1+\beta x)^{\nu}} = \frac{u^{\mu}}{\mu} {_2}F_1 (\nu, \mu; 1+ \mu; -\beta u)$.
Then, the final results can be obtained by substituting the results of $I_1(a)$, $I_2(a,b,p,q)$ and $I_3(a,b,p,s)$ into (\ref{eq:E_Upsilon_2}).

\subsection{Derivation of $\Psi(n, u, v)$}\label{appendix:Psi}
First, for $n>0$, $\Psi(n, u, v)$ can be obtained by integration by parts as
\begin{eqnarray}\label{eq:result1}
\hspace{-5mm}\Psi(n, u, v) &=&  \int_u^v E_i(-x) ~d \left[ e^x \sum_{k=0}^n \frac{(-1)^k n!}{(n-k)!} x^{n-k} \right]  \\
\hspace{-5mm}&=& \left[ E_i(-x) e^x \sum_{k=0}^n \frac{(-1)^k n!}{(n-k)!} x^{n-k} \right]_u^v \nonumber\\
\hspace{-5mm}&&- \int_u^v e^x \sum_{k=0}^n \frac{(-1)^k n!}{(n-k)!} x^{n-k} ~d [E_i(-x)] \nonumber\\
\label{eq:result2}
\hspace{-5mm}&=& \left[ E_i(-x) e^x \sum_{k=0}^n \frac{(-1)^k n!}{(n-k)!} x^{n-k} \right]_u^v \nonumber\\
\hspace{-5mm}&&- \sum_{k=0}^n \frac{(-1)^k n!}{(n-k)!} \int_u^v x^{n-k-1} ~dx,
\end{eqnarray}
which is given in (\ref{eq:res_Psi_1}) with $n > 0$, where (\ref{eq:result1}) follows from the result \cite[2.321.2]{Gradshteyn2007} $\int x^n e^{ax} dx = e^{ax}   \Big( \sum_{k=0}^{n} \frac{(-1)^k n!}{a^{k+1} (n-k)!} x^{n-k} \Big)$ and (\ref{eq:result2}) follows from $\left( E_i(-x) \right)' = x^{-1} e^{-x}$.

For $n=0$, we can similarly obtain
\begin{eqnarray}
\hspace{-9mm}&&\Psi(0, u, v)=  \int_u^v E_i(-x) {\rm d} e^x  =  e^x E_i(-x)\Big|_{u}^{v} - \int_u^v x^{-1} {\rm d} x \nonumber\\
\hspace{-9mm}&&\hspace{15.5mm}=  e^v E_i(-v) - e^u E_i(-u) - \ln \left( \frac{v}{u} \right). \nonumber
\end{eqnarray}

For $n\leq -2 $, $\Psi(n, u, v)$ can be obtained as
\begin{eqnarray}\label{eq:result3}
\hspace{-8mm}&&\Psi(n, u, v)
= \int_u^v E_i(-x) d \Bigg[\hspace{-1mm} -e^x \sum_{k=1}^{-n-1} \frac{(-n-k-1)!}{(-n-1)!} x^{n+k} \nonumber\\
\hspace{-7mm}&&\hspace{55mm}+ \frac{E_i(x)}{(-n-1)!} \Bigg] \\
\hspace{-7mm}&&= \Bigg[ E_i(-x) \Bigg( -e^x \sum_{k=1}^{-n-1} \frac{(-n-k-1)!}{(-n-1)!} x^{n+k} \nonumber\\
\hspace{-7mm}&&\hspace{55mm}+ \frac{E_i(x)}{(-n-1)!} \Bigg) \Bigg]_u^v \nonumber\\
\label{eq:result4}
\hspace{-7mm}&&- \int_u^v \Bigg( -e^x \sum_{k=1}^{-n-1} \frac{(-n-k-1)!}{(-n-1)!} x^{n+k} \nonumber\\
\hspace{-7mm}&&\hspace{40mm}+ \frac{E_i(x)}{(-n-1)!} \Bigg) ~d [E_i(-x)],
\end{eqnarray}
where (\ref{eq:result3}) follows from the result \cite[2.324.2]{Gradshteyn2007} $\int \frac{e^{ax}}{x^n} dx = -e^{ax} \sum_{k=1}^{n-1} \frac{(n-k-1)! a^{k-1}}{(n-1)! x^{n-k}} + \frac{a^{n-1} E_i(ax)}{(n-1)!}  (n>0)$ with $a=1$. Moreover, $\int_u^v E_i(x) ~d[E_i(-x)]$ can be obtained as
\begin{eqnarray}
\label{eq:result5}
\hspace{-8mm}&& \int_u^v E_i(x) ~d[E_i(-x)]= \int_u^v \bigg( E_i(-x) \nonumber\\
\hspace{-8mm}&&\hspace{15mm}+\sum_{m=0}^{\infty} \frac{2 x^{2m+1}}{(2m+1)(2m+1)!}\bigg) ~d[E_i(-x)]\\
\label{eq:result5_2}
\hspace{-8mm}&&= \frac{1}{2} E_i^2(-x) \Big|_u^v+ \sum_{m=0}^{\infty} \frac{ 2 }{ (2m+1)(2m+1)! } \nonumber\\
\hspace{-8mm}&&\hspace{15mm}\times\left( \int_0^v x^{2m} e^{-x} ~dx - \int_0^u x^{2m} e^{-x} ~dx \right) \nonumber\\
\label{eq:result6}
\hspace{-8mm}&&= \frac{1}{2} \left[ E_i^2(-v) - E_i^2(-u) \right]
+ \sum_{m=0}^{\infty} \sum_{l=0}^{2m} \frac{2 (u^l e^{-u} - v^l e^{-v})}{(2m+1)^2\,l!}, \nonumber\\
\hspace{-8mm}&&
\end{eqnarray}
where (\ref{eq:result5}) follows from\cite[8.214.3]{Gradshteyn2007} $E_i(x) = E_i(-x) + \sum_{m=0}^{\infty} \frac{2 x^{2m+1}}{(2m+1)(2m+1)!}$ and (\ref{eq:result6}) follows from \cite[3.351.1]{Gradshteyn2007}. Then, the expression of $\Psi(n, u, v)$ for $n\leq -2$ in (\ref{eq:res_Psi_1}) can be obtained by combing (\ref{eq:result4}) and (\ref{eq:result6}).
Similarly, for $n = -1$, we can obtain
\begin{eqnarray}
\hspace{-0.7cm}&& \Psi(-1, u, v) = \int_u^v x^{-1} e^x E_i(-x) ~dx = \int_u^v E_i(-x) ~d E_i(x) \nonumber\\
\hspace{-0.7cm}&&\hspace{1.8cm}= \left[ E_i(-x) E_i(x) \right]_u^v -\hspace{-1mm} \int_u^v E_i(x) ~d [E_i(-x)],
\end{eqnarray}
which is given by (\ref{eq:res_Psi_1}) with $n= -1$. Here, we employ again \cite[3.351.1]{Gradshteyn2007}.

For $n<-1$ with $v\to+\infty$, we can obtain
\begin{eqnarray}\label{eq:Psi_nless0_inf_1}
\hspace{-7mm}&&\tilde{\Psi}(n, u) = \Psi(n, u, v)\big|_{v \to +\infty} = \int_u^{\infty} x^n e^x E_i(-x) ~dx \nonumber\\
\hspace{-7mm}&&= \Bigg[ E_i(-x) \Bigg( -e^x \sum_{k=1}^{-n-1} \frac{(-n-k-1)!}{(-n-1)!} x^{n+k} \nonumber\\
\hspace{-7mm}&&\hspace{4mm}+ \frac{E_i(x)}{(-n-1)!} \Bigg) \Bigg]_u^{\infty}
+ \left[ \sum_{k=1}^{-n-1} \frac{ (-n-k-1)!~x^{n+k} }{ (-n-1)!(n+k) } \right]_u^{\infty} \nonumber\\
\hspace{-7mm}&&\hspace{4mm}- \frac{1}{(-n-1)!} \int_u^{\infty} E_i(x) ~d [E_i(-x)].
\end{eqnarray}
According to \cite[5.1.19]{abramowitz1965handbook}, we have $\frac{1}{x+1} < e^x E_1(x) \leq \frac{1}{x}~(x>0)$. Since $E_1(x) = -E_i(-x)$, we have $-\frac{1}{x} < e^x E_i(-x) \leq -\frac{1}{x+1}~(x>0)$.
Then, it is easy to see $\lim \limits_{x \to +\infty} e^x E_i(-x) = 0$. In addition, using L{'}Hospital rule it is easy to obtain $\lim \limits_{x \to +\infty} E_i(-x) E_i(x) = 0$.
Thus, for $n+k < 0$, the first term in (\ref{eq:Psi_nless0_inf_1}) can be obtained as $\left[ E_i(-x) \left( -e^x \sum_{k=1}^{-n-1} \frac{(-n-k-1)!}{(-n-1)!} x^{n+k} + \frac{E_i(x)}{(-n-1)!} \right) \right]_u^{\infty}  =  E_i(-u) \left( e^u \sum_{k=1}^{-n-1} \frac{(-n-k-1)!}{(-n-1)!} u^{n+k} - \frac{E_i(u)}{(-n-1)!} \right) $.
The second term in (\ref{eq:Psi_nless0_inf_1}) is given by $\bigg[ \sum_{k=1}^{-n-1} \frac{ (-n-k-1)!~x^{n+k} }{ (-n-1)!(n+k) } \bigg]_u^{\infty}   = -\sum_{k=1}^{-n-1} \frac{ (-n-k-1)!~u^{n+k} }{ (-n-1)!(n+k) }$. Similar to (\ref{eq:result6}), the third term in (\ref{eq:Psi_nless0_inf_1}) can be obtained as $\int_u^{+\infty} E_i(x) ~d[E_i(-x)] = -\frac{1}{2} E_i^2(-u) + \sum_{m=0}^{\infty} \sum_{l=0}^{2m} \frac{2 u^l e^{-u}}{(2m+1)^2\,l!}$,
where we have used \cite[3.351.2]{Gradshteyn2007}. Then, the result of $\Psi(n, u, v)\big|_{v \to +\infty}$ in (\ref{eq:res_Psi_2}) follows by substituting the above results into (\ref{eq:Psi_nless0_inf_1}).

\subsection{Proof of Theorem \ref{theorem:R_nk_Loss_Upper}}\label{proof:theorem_R_nk_Loss_Upper}
With the expressions of $R_{n,k}^{ideal} $ and $ R_{n,k} $ we have
\begin{small}
\begin{eqnarray}
\hspace{-9mm}&&\Delta R_{n,k}
= \Bbb{E} \left[ \log_2 \left( 1 + S_{n,k}^{(1)} \left| \mathbf{h}_{n,k} \mathbf{w}_n \right|^2 \right) \right]\hspace{-5mm}\nonumber\\
\hspace{-9mm}&&- \Bbb{E} \left[ \log_2 \left( 1 + S_{n,k}^{(2)} \left| \mathbf{h}_{n,k} \mathbf{w}_n \right|^2 \right) \right] \hspace{-1mm} - \hspace{-0.6mm} \Bbb{E} \Bigg[ \log_2 \bigg( 1 + S_{n,k}^{(1)} \left| \mathbf{h}_{n,k} \mathbf{w}_n \right|^2 \hspace{-5mm}\nonumber\\
\hspace{-9mm}&&\hspace{10mm}+ \frac{d_{n,k}^{-\alpha}}{\sigma_{n,k}^2} \sin^2\theta_{n,k}||\mathbf{h}_{n,k}||^2 \sum_{i=1,i \neq n}^{N} \left| \tilde{\mathbf{e}}_{n,k} \mathbf{w}_i \right|^2 \sum_{l=1}^{K} P_{i,l} \bigg) \Bigg] \hspace{-5mm}\nonumber\\
\hspace{-9mm}&&\hspace{2mm}+~ \Bbb{E} \Bigg[ \log_2 \Bigg( 1 + S_{n,k}^{(2)} \left| \mathbf{h}_{n,k} \mathbf{w}_n \right|^2 + \frac{d_{n,k}^{-\alpha}}{\sigma_{n,k}^2} \sin^2\theta_{n,k}||\mathbf{h}_{n,k}||^2 \hspace{-5mm}\nonumber\\
\label{eq:Delta_Rnk_1}
\hspace{-9mm}&&\hspace{25mm}\times \sum_{i=1,i \neq n}^{N} \left| \tilde{\mathbf{e}}_{n,k} \mathbf{w}_i \right|^2 \sum_{l=1}^{K} P_{i,l} \Bigg) \Bigg] \hspace{-5mm} \\
\label{eq:Delta_Rnk_2}
\hspace{-9mm}&&\leq \Bbb{E} \Bigg[ \log_2 \Bigg( 1 + S_{n,k}^{(2)} \left| \mathbf{h}_{n,k} \mathbf{w}_n \right|^2 + \frac{d_{n,k}^{-\alpha}}{\sigma_{n,k}^2} \sin^2\theta_{n,k} ||\mathbf{h}_{n,k}||^2 \hspace{-5mm}\\
\hspace{-9mm}&&\hspace{1mm}\times \hspace{-1mm} \sum_{i=1,i \neq n}^{N} \left| \tilde{\mathbf{e}}_{n,k} \mathbf{w}_i \right|^2 \sum_{l=1}^{K} P_{i,l} \Bigg) \Bigg]
\hspace{-1mm}-\hspace{-0.6mm} \Bbb{E} \left[ \log_2 \left( 1 + S_{n,k}^{(2)} \left| \mathbf{h}_{n,k} \mathbf{w}_n \right|^2 \right] \right) \hspace{-5mm}\nonumber\\
\label{eq:Delta_Rnk_3}
\hspace{-9mm}&&\leq \log_2 \left( 1 + S_{n,k}^{(2)} + \frac{M}{M-1} 2^{B_{n,k}} \beta \left( 2^{B_{n,k}}, \frac{M}{M-1} \right) S_{n,k}^{(3)} \right) \hspace{-5mm}\nonumber\\
\hspace{-9mm}&&\hspace{8mm}- \log_2(e) e^{\frac{1}{S_{n,k}^{(2)}}} E_1 \left( \small{\frac{1}{S_{n,k}^{(2)}}} \right) \hspace{-5mm}\nonumber\\
\label{eq:Delta_R_nk_4}
\hspace{-9mm}&&= \log_2 \left( 1 + S_{n,k}^{(2)} + \Gamma \left( \frac{2M-1}{M-1} \right) 2^{-\frac{B_{n,k}}{M-1}}  S_{n,k}^{(3)} \right) \hspace{-5mm}\nonumber\\
\hspace{-9mm}&&\hspace{8mm}- \log_2(e) e^{\frac{1}{S_{n,k}^{(2)}}} E_1 \left( \small{\frac{1}{S_{n,k}^{(2)}}} \right),
\end{eqnarray}
\end{small}
where (\ref{eq:Delta_Rnk_2}) follows by neglecting $\frac{d_{n,k}^{-\alpha}}{\sigma_{n,k}^2} \sin^2\theta_{n,k}||\mathbf{h}_{n,k}||^2 \sum_{i=1,i \neq n}^{N} \left| \tilde{\mathbf{e}}_{n,k} \mathbf{w}_i \right|^2 \sum_{l=1}^{K} P_{i,l}$ in the third $\log_2$ of (\ref{eq:Delta_Rnk_1}).
(\ref{eq:Delta_Rnk_3}) is obtained by applying Jensen’s inequality to the first $\log_2$ term of (\ref{eq:Delta_Rnk_2}).
(\ref{eq:Delta_R_nk_4}) follows by combing the results $\Bbb{E} [| \mathbf{h}_{n,k} \mathbf{w}_n |^2] = 1$, $\Bbb{E} [||\mathbf{h}_{n,k}||^2] = M$ and $\Bbb{E} [\sin^2\theta_{n,k}] = 2^{B_{n,k}} \beta \Big( 2^{B_{n,k}}, \frac{M}{M-1} \Big)$ with $2^{B_{n,k}} \beta \left( 2^{B_{n,k}}, \frac{M}{M-1} \right) \approx \Gamma \left( \frac{M}{M-1} \right) 2^{-\frac{B_{n,k}}{M-1}}$ and $x \Gamma(x) = \Gamma(x+1)$, which were shown in \cite{Jindal06}.

\subsection{Proof of Theorem \ref{theorem:B_without_integer}} \label{proof:theorem_B_without_integer}
It is easy to show that the objective function is logarithm convex function of variables $B_{n, k}$s. Thus, the problem of (\ref{problem:bit_allocation_prod}) is a convex optimization problem. The Lagrangian is given by
\begin{eqnarray}
\hspace{-8mm}&&\mathcal{L}(B_{n,k}, \lambda) = \lambda \left( \sum_{n=1}^{N} \sum_{k=1}^{K} B_{n,k} - B \right) \nonumber\\
\hspace{-8mm}&&\hspace{8mm}+ \prod_{n=1}^N \prod_{k=1}^{K} \left( 1 + S_{n,k}^{(2)} + \Gamma \left( \frac{2M-1}{M-1} \right) 2^{-\frac{B_{n,k}}{M-1}}  S_{n,k}^{(3)} \right), \nonumber
\end{eqnarray}
where $\lambda$ is the is the Lagrange multiplier associated with total feedback bits constraint. Applying the Karush-Kuhn-Tucker (KKT) condition, we can get the necessary and sufficient conditions as
\vspace{-1mm}
\begin{eqnarray}\label{eq:KKT_Bnk}
\hspace{-8mm}&& \frac{\partial \mathcal{L} }{\partial B_{n,k}} \Bigg|_{ B^{\star}_{n,k}, ~\lambda^{\star}} = -\frac{\ln(2) \Gamma\left(\frac{2M-1}{M-1}\right) S_{n,k}^{(3)} 2^{-\frac{B_{n,k}}{M-1}}}{(M-1)} \nonumber\\
\hspace{-8mm}&&\times \prod_{i=1}^N \prod_{j=1}^{K} \Bigg( 1 + S_{i,j}^{(2)}  + \Gamma \left( \frac{2M-1}{M-1} \right) 2^{-\frac{B_{i,j}^{\star}}{M-1}} S_{i,j	}^{(3)} \Bigg) \nonumber\\
\hspace{-8mm}&&\times  \left( 1 + S_{n,k}^{(2)} + \Gamma \left( \frac{2M-1}{M-1} \right) 2^{-\frac{B_{n,k}^{\star}}{M-1}} S_{n,k}^{(3)} \right)^{-1} + \lambda^{\star} = 0, \nonumber\\
\hspace{-8mm}&&\hspace{65mm}\forall ~ n,k,
\end{eqnarray}
\begin{eqnarray}\label{eq:KKT_lambda}
\frac{\partial \mathcal{L} (B_{n,k}, \lambda)}{\partial \lambda} \Bigg|_{B^{\star}_{n,k},~ \lambda^{\star}} =
\sum_{n=1}^{N} \sum_{k=1}^{K} B_{n,k}^{\star} - B = 0,&&  \nonumber\\
\text{if}~~  \lambda^{\star} \geq 0.&&
\end{eqnarray}
From (\ref{eq:KKT_Bnk}) we can obtain the relationship  $B_{p,q}^{\star} = B_{n,k}^{\star} - (M-1) \log_2 \left[ \frac{S_{n,k}^{(3)}(1 + S_{p,q}^{(2)})}{S_{p,q}^{(3)}(1 + S_{n,k}^{(2)})} \right]$ for $\forall p \neq n$ and $\forall q \neq k$. Then, the solution of $B^{\star}_{n,k}$ in (\ref{eq:B_nk}) follows by substituting this relationship into (\ref{eq:KKT_lambda}).

\subsection{Derivation of the Lower Bound $R_{n,k}^{LB2}$}\label{appendix:LB2}
Based on (\ref{eq:Rnk_lower1}), we have (\ref{eq:Rnk_lower2_1}) and (\ref{eq:Rnk_lower2_2}) at the top of this page,
\begin{figure*}
\begin{eqnarray}
R_{n,k} &\geq& 
\Bbb{E} \left[ \log_2 \left( 1 + \bar{S}_{n,k}^{(1)} \left| \mathbf{h}_{n,k} \mathbf{w}_n \right|^2 \right) \right]
+ \Bbb{E} \left[ \log_2 \left( 1 + \frac{\bar{S}_{n,k}^{(3)}(M-1)^{-1}||\mathbf{h}_{n,k}||^2 \sin^2\theta_{n,k}}{1 + \bar{S}_{n,k}^{(1)} \left| \tilde{\mathbf{h}}_{n,k} \mathbf{w}_n \right|^2 ||\mathbf{h}_{n,k}||^2} \right) \right] \nonumber\\
&&- \Bbb{E} \left[ \log_2 \left( 1 + \bar{S}_{n,k}^{(2)} \left| \tilde{\mathbf{h}}_{n,k} \mathbf{w}_n \right|^2 ||\mathbf{h}_{n,k}||^2 + \bar{S}_{n,k}^{(3)}(M-1)^{-1}||\mathbf{h}_{n,k}||^2 \sin^2\theta_{n,k} \right) \right] \nonumber\\
\label{eq:Rnk_lower2_1}
&{\geq}& \Bbb{E} \left[ \log_2 \left( 1 + \bar{S}_{n,k}^{(1)} \left| \mathbf{h}_{n,k} \mathbf{w}_n \right|^2 \right) \right]
+ \Bbb{E} \left[ \log_2 \left( 1 + \frac{\bar{S}_{n,k}^{(3)}(M-1)^{-1}||\mathbf{h}_{n,k}||^2 \sin^2\theta_{n,k}}{1 + \bar{S}_{n,k}^{(1)} M^{-1} ||\mathbf{h}_{n,k}||^2} \right) \right] \nonumber\\
&&- \Bbb{E} \left[ \log_2 \left( 1 + \bar{S}_{n,k}^{(2)} M^{-1} ||\mathbf{h}_{n,k}||^2 + \bar{S}_{n,k}^{(3)}(M-1)^{-1}||\mathbf{h}_{n,k}||^2 \sin^2\theta_{n,k} \right) \right] \\
&=& \Bbb{E} \left[ \log_2 \left( 1 + \bar{S}_{n,k}^{(1)} \left| \mathbf{h}_{n,k} \mathbf{w}_n \right|^2 \right) \right] - \Bbb{E} \left[ \log_2 \left( 1 + \bar{S}_{n,k}^{(1)} M^{-1} ||\mathbf{h}_{n,k}||^2 \right) \right] \nonumber\\
\label{eq:Rnk_lower2_2}
&&+ \Bbb{E} \left[ \log_2 \left( 1 + \frac{\frac{d_{n,k}^{-\alpha} P_{n} ||\mathbf{h}_{n,k}||^2}{\sigma_{n,k}^2 K M}}{1 + \bar{S}_{n,k}^{(2)} M^{-1} ||\mathbf{h}_{n,k}||^2 + \bar{S}_{n,k}^{(3)}(M-1)^{-1}||\mathbf{h}_{n,k}||^2 \sin^2\theta_{n,k}} \right) \right].
\end{eqnarray}
\vspace{-4mm}
\hrulefill
\end{figure*}
where (\ref{eq:Rnk_lower2_1}) follows from applying Jensen’s inequality to $| \tilde{\mathbf{h}}_{n,k} \mathbf{w}_n |^2$ and the result $\Bbb{E} [| \tilde{\mathbf{h}}_{n,k} \mathbf{w}_n |^2] = M^{-1}$\cite{Jindal06}.
Then, by applying $\Bbb{E} [\log_2(1+X)] \geq \log_2(1+\frac{1}{\Bbb{E}[1/X]})$ to (\ref{eq:Rnk_lower2_2}), we have
\begin{small}
\begin{eqnarray}\label{eq:R_nk_power}
\hspace{-8mm}&& R_{n,k}\geq\Bbb{E} \left[ \log_2 \left( 1 + \bar{S}_{n,k}^{(1)} \left| \mathbf{h}_{n,k} \mathbf{w}_n \right|^2 \right) \right] \nonumber\\
\hspace{-8mm}&&- \Bbb{E} \left[ \log_2 \left( 1 + \bar{S}_{n,k}^{(1)} M^{-1} ||\mathbf{h}_{n,k}||^2 \right) \right] + \log_2 \bigg( 1 + { \frac{d_{n,k}^{-\alpha} P_{n}} {\sigma_{n,k}^2 K} }
\bigg/ \nonumber\\
\hspace{-8mm}&&{ \Bbb{E} \bigg[ \frac{1 + \bar{S}_{n,k}^{(2)} M^{-1} ||\mathbf{h}_{n,k}||^2 + \bar{S}_{n,k}^{(3)}(M-1)^{-1} \sin^2\theta_{n,k}||\mathbf{h}_{n,k}||^2 }{M^{-1} ||\mathbf{h}_{n,k}||^2 }  \bigg] } \bigg), \nonumber
\end{eqnarray}
\end{small}
which is given by (\ref{eq:R_nk_LB2}) and the final result follows from the approximation of $\Bbb{E}[\sin^2\theta_{n,k}] $ in \cite{Jindal06}. For $\Delta R_{n,k }^{LB2}  $ in (\ref{eq:R_nk_LB2}), we have
\begin{eqnarray}
\hspace{-0.8cm}&&\Delta R_{n,k }^{LB2}  - \Bbb{E} \left[ \log_2 \left( M | \tilde{\mathbf{h}}_{n,k} \mathbf{w}_n |^2 \right) \right] \nonumber\\
\hspace{-0.8cm}&=& \Bbb{E} \left[ \log_2 \left( \frac{1 + \bar{S}_{n,k}^{(1)} ||\mathbf{h}_{n,k}||^2 | \tilde{\mathbf{h}}_{n,k} \mathbf{w}_n |^2 }{ \left( M + \bar{S}_{n,k}^{(1)} ||\mathbf{h}_{n,k}||^2 \right) | \tilde{\mathbf{h}}_{n,k} \mathbf{w}_n |^2} \right) \right] \nonumber\\
\label{eq:log_M_hw}
\hspace{-0.8cm}&\geq& \Bbb{E} \left[ \log_2 \left( \frac{1 + \bar{S}_{n,k}^{(1)} ||\mathbf{h}_{n,k}||^2 M^{-1} }{ \left( M + \bar{S}_{n,k}^{(1)} ||\mathbf{h}_{n,k}||^2 \right) M^{-1}} \right) \right] = 0, \nonumber
\end{eqnarray}
which follows from Jensen's inequality, since $\log_2 \left( \frac{1+\zeta  x}{(M+\zeta)x} \right) $ is a convex function of $x$.

Define $ g (P_n) \triangleq \Bbb{E} \Big[ \log_2 \Big( 1 + \bar{S}_{n,k}^{(1)} \left| \mathbf{h}_{n,k} \mathbf{w}_n \right|^2 \Big) \Big] - \Bbb{E} \Big[ \log_2 \Big( 1 + \bar{S}_{n,k}^{(1)} M^{-1} ||\mathbf{h}_{n,k}||^2 \Big) \Big]$. It is easy to obtain $ g^{'}(P_n) \triangleq\frac{ {\rm d}g(P_n)}{{\rm d} P_n} = \Bbb{E} \bigg[ \Big( 1 + \frac{d_{n,k}^{-\alpha} k P_n}{\sigma_{n,k}^2 K M} ||\mathbf{h}_{n,k}||^2 \Big)^{-1} \bigg] - \Bbb{E} \bigg[ \Big( 1 + \frac{d_{n,k}^{-\alpha} k P_n}{\sigma_{n,k}^2 K} ||\mathbf{h}_{n,k}||^2 \left| \mathbf{h}_{n,k} \mathbf{w}_n \right|^2 \Big)^{-1} \bigg]$. Applying Jensen’s inequality on the second term, it is easy to show $g^{'}(P_n) \leq 0$. Thus, $g(P_n)$ is a decreasing function of $P_n$ with $ g (0) = 0$. Thus, $ g (P_n) <0$ when $P_n >0$. Moreover, since $ g^{'}(0) = 0$ and $\lim \limits_{P_n \to \infty} g^{'}(P_n)= 0$, $|g(P_n)|$ is relatively small when $P_n$ is small and becomes less sensitive to $P_n$ as $P_n$ goes large. In addition, since $ | \tilde{\mathbf{h}}_{n,k} \mathbf{w}_n |^2 \sim \text{Beta} (1, M-1)$, we have $\Bbb{E} \Big[ \log_2 \left( M | \tilde{\mathbf{h}}_{n,k} \mathbf{w}_n |^2 \right) \Big] = \log_2 (M) + (M-1) \int_0^1 \log_2(x) (1-x)^{M-2} dx =  \log_2(M) -  \log_2(e) \sum_{k=1}^{M-1} k^{-1}  $\cite[4.253.1, 8.365.3]{Gradshteyn2007} . It is easy to show $h(M) = \log_2(M) -  \log_2(e) \sum_{k=1}^{M-1} k^{-1}$ is a increasing function of $M$ and $ \lim_{M \rightarrow \infty} h(M) = -\log_2(e) \mathbf{C} \approx -0.8327$\cite[8.367.2]{Gradshteyn2007}.


\subsection{Proof of Theorem \ref{theorem:power_alloc}}\label{proof:theorem_power_alloc}
First, denote $C \triangleq P^{\frac{1}{K}} \left( \prod_{p=1}^N \left( 1 - \phi_p \right)^{\frac{1}{N}} \right)\\ \times \left( \prod_{p=1}^N \prod_{q=2}^K \left( \frac{1}{P} + \frac{d_{p,q}^{-\alpha} (q-1) \phi_p}{\sigma_{p,q}^2 K} \right)^{-\frac{1}{NK}}\right)$.
Then, by substituting $C$ into (\ref{eq:tilde_R_nk_LB_2}) we can re-express $\tilde{R}^{LB2}_{n,k}  (\left\{\phi_n \right\})$ as (\ref{eq:tilde_R_sum_LB_(1)}) shown at the top of the next page.
\begin{figure*}
\begin{eqnarray}\label{eq:tilde_R_sum_LB_(1)}
\tilde{R}_{sum}^{LB}\; (1)  &=&   \sum_{n=1}^N \log_2 \left( 1 +\frac{ \frac{d_{n,1}^{-\alpha} \phi_n P }{ \sigma_{n,1}^2 K }}
{\frac{M}{(M-1)} + \Gamma \left( \frac{2M-1}{M-1} \right)2^{-\frac{B}{NK(M-1)}}
\left( \prod_{p=1}^N \prod_{q=1}^K \left( \frac{d_{p,q}^{-\alpha}}{\sigma_{p,q}^2} \right)^{\frac{1}{NK}} \right)C}
 \right).
\end{eqnarray}
\vspace{-5mm}
\hrulefill
\end{figure*}
In the following, we identify $C$ as a constant. Then, the problem in (\ref{eq:power_optimization}) becomes a convex optimization problem.
The Lagrangian associated with the problem in (\ref{eq:power_optimization}) is defined as (\ref{eq:mathcal_L_phi_nk}) $\forall n, k$ at the top of this page,
\begin{figure*}
\begin{equation}\label{eq:mathcal_L_phi_nk}
\mathcal{L}_{\phi}(\phi_{n,k}, \lambda_{\phi}) = -\prod_{n=1}^N \left( 1 +\frac{ \frac{d_{n,1}^{-\alpha} \phi_n P }{ \sigma_{n,1}^2 K }}{\frac{M}{(M-1)} + \Gamma \left( \frac{2M-1}{M-1} \right)2^{-\frac{B}{NK(M-1)}} \left( \prod_{p=1}^N \prod_{q=1}^K \left( \frac{d_{p,q}^{-\alpha}}{\sigma_{p,q}^2} \right)^{\frac{1}{NK}} \right)C} \right)
+ \lambda_{\phi} \left( \sum_{n=1}^{N} \phi_n - 1 \right).
\end{equation}
\vspace{-5mm}
\hrulefill
\end{figure*}
where $\lambda_{\phi}$ is the Lagrangian multiplier associated with the sum power constraint. According to the KKT conditions for optimality, we can obtain (\ref{eq:KKT_phi_nk}) and (\ref{eq:KKT_lambda_phi_nk}) at the top of this page.
\begin{figure*}
\begin{eqnarray}\label{eq:KKT_phi_nk}
\frac{\partial \mathcal{L}_{\phi} (\phi_n, \lambda_{\phi})}{\partial \phi_n}\Bigg |_{\left\{\phi_n^{\star} \right\}, \lambda_{\phi}^{\star}} &=& -\frac{ \frac{d_{n,1}^{-\alpha} P }{ \sigma_{n,1}^2 K } } {\frac{M}{(M-1) } + \Gamma \left( \frac{2M-1}{M-1} \right)
\left( \prod_{p=1}^N \prod_{q=1}^K \left( \frac{d_{p,q}^{-\alpha}}{\sigma_{p,q}^2} \right)^{\frac{1}{NK}} \right) 2^{-\frac{B}{NK(M-1)}} C} \nonumber \\
&& \hspace{-0.5cm} \times \prod_{i=1,i \neq n}^N \left( 1 + \frac{\frac{d_{i,1}^{-\alpha} P }{ \sigma_{i,1}^2 K }  \phi_i^{\star} }
{\frac{M}{(M-1)} + \Gamma \left( \frac{2M-1}{M-1} \right) \left( \prod_{p=1}^N \prod_{q=1}^K \left(\frac{d_{p,q}^{-\alpha}}{\sigma_{p,q}^2} \right)^{\frac{1}{NK}} \right) 2^{-\frac{B}{NK(M-1)}} C} \right)  + \lambda_{\phi}^{\star} = 0, ~~~~~ \\
\frac{\partial \mathcal{L}_{\phi} (\phi_n, \lambda_{\phi})}{\partial \lambda_{\phi}} \Bigg |_{\left\{\phi_n^{\star} \right\}}&=&
\sum_{n=1}^{N} \phi_n^{\star} - 1 = 0. \label{eq:KKT_lambda_phi_nk}
\end{eqnarray}
\vspace{-4mm}
\hrulefill
\end{figure*}

According to (\ref{eq:KKT_phi_nk}),  we can obtain the following relationship for $\forall i \neq n$ that,
\begin{eqnarray}
\hspace{-7mm}&&\phi_i^{\star} = \phi_n^{\star} + K \Bigg( \frac{M}{(M-1)P} + \Gamma \left( \frac{2M-1}{M-1} \right) \nonumber\\
\hspace{-7mm}&&\hspace{1mm}\times\left( \prod_{p=1}^N \prod_{q=1}^K \Big( \frac{d_{p,q}^{-\alpha}}{\sigma_{p,q}^2} \Big)^{\frac{1}{NK}} \right) \frac{2^{-\frac{B}{NK(M-1)}}}{P} C  \Bigg)
\bigg( \frac{\sigma_{n,1}^2}{d_{n,1}^{-\alpha}} - \frac{\sigma_{i,1}^2}{d_{i,1}^{-\alpha}} \bigg).\nonumber
\end{eqnarray}
Substituting the above relationship into (\ref{eq:KKT_lambda_phi_nk}), we can obtain (\ref{eq:phi_n_res_final}). Substituting $\phi_n^{\star}$ as a function of $C$ given by (\ref{eq:phi_n_res_final}) into the definition of $C$ above, we can obtain the equation about the optimal $C^{\star}$ given by (\ref{eq:equation_C}).
Moreover, according to (\ref{eq:cluster_criteria1}) and (\ref{eq:cluster_criteria2}), it is easy to prove $\sum_{i=1,i \neq m}^N \left( \frac{\sigma_{m,1}^2}{d_{m,1}^{-\alpha}} - \frac{\sigma_{i,1}^2}{d_{i,1}^{-\alpha}} \right) \leq \sum_{i=1,i \neq n}^N \left( \frac{\sigma_{n,1}^2}{d_{n,1}^{-\alpha}} - \frac{\sigma_{i,1}^2}{d_{i,1}^{-\alpha}} \right)$, $\forall  1<m < n <N$. Then, the order of the coefficients $\phi_i$s follows.

\bibliographystyle{IEEEtran}
\bibliography{sunliang_bib}

\end{document}